\documentclass[a4paper,UKenglish, thm-restate, cleveref, autoref, notab, nolineno]{socg-lipics-v2021}

\usepackage{todonotes}

\usepackage{amssymb}
\usepackage{amsmath}
\usepackage{xurl} 
\usepackage{xspace}

\usepackage{array}
\usepackage{booktabs}
\usepackage{multirow}
\newcolumntype{C}[1]{>{\centering\arraybackslash}p{#1}}
\newcolumntype{R}[1]{>{\raggedleft\arraybackslash}p{#1}}

\newcommand{\pl}{PL\xspace}
\newcommand{\real}[1]{\mathbb{R}^{#1}}
\newcommand{\revision}[1]{{#1}}

\title{Singular Arrange and Traverse Algorithm for Computing Reeb Spaces of Bivariate PL Maps} 

\titlerunning{Singular Arrange and Traverse Algorithm for Reeb Spaces} 

\author{Petar Hristov\footnote{Corresponding author}}{Scientific Visualization Group, Department of Science and Technology (ITN), Linköping University, Sweden}{petar.hristov@liu.se}{[https://orcid.org/0000-0002-1482-2529]}{}

\author{Ingrid Hotz}{Scientific Visualization Group, Department of Science and Technology (ITN), Linköping University, Sweden}{ingrid.hotz@liu.se}{[https://orcid.org/0000-0001-7285-0483]}{The author acknowledges support from the Swedish e-Science Research Center (SeRC), the ELLIIT environment for strategic research in Sweden, and the Swedish Research Council~(VR) grant 2019-05487.}

\author{Talha Bin Masood}{Scientific Visualization Group, Department of Science and Technology (ITN), Linköping University, Sweden}{talha.bin.masood@liu.se}{[https://orcid.org/0000-0001-5352-1086]}{This work was primarily supported by the WASP-DDLS grant from Wallenberg Autonomous Systems and Software Program (WASP) funded by the Knut and Alice Wallenberg Foundation. The author acknowledges Swedish Research Council (VR) grant 2023-04806 and the Swedish e-Science Research Center (SeRC) for additional funding support.}

\authorrunning{P. Hristov, I. Hotz, and T. Bin Masood} 

\Copyright{Petar Hristov, Ingrid Hotz, and Talha Bin Masood} 

\ccsdesc[100]{Theory of computation $\to$ Computational geometry; Mathematics of computing $\to$ Geometric topology} 

\keywords{Computational topology, Reeb graph, Reeb space, Multivariate data, Multifield, Geometric arrangement} 

\category{} 

\relatedversion{} 

\acknowledgements{We thank Dr. Nanna H. List (KTH Royal Institute of Technology, Stockholm) for providing the MVK dataset used for evaluation in this paper. The Ethane-diol dataset is taken from the TTK dataset repository (\url{https://github.com/topology-tool-kit/ttk-data}) with acknowledgments to Roberto Alvarez Boto.
The enzo dataset is taken from the \emph{The Enzo Collaboration} with acknowledgments to Michael Norman (University of California, San Diego, USA).}

\nolinenumbers 

\supplementdetails{(Software Implementation)}{https://doi.org/10.5281/zenodo.18759995}

\begin{document}

\maketitle

\begin{abstract}
We present an exact and efficient algorithm for computing the Reeb space of a bivariate \pl map.
The Reeb space is a topological structure that generalizes the Reeb graph to the setting of multiple scalar-valued functions defined over a shared domain, a situation that frequently arises in practical applications.
While the Reeb graph has become a standard tool in computer graphics, shape analysis, and scientific visualization, the Reeb space is still in the early stages of adoption.
Although several algorithms for computing the Reeb space have been proposed, none offer an implementation that is both exact and efficient, which has substantially limited its practical use.
To address this gap, we introduce \emph{singular arrange and traverse}, a new algorithm built upon the \emph{arrange and traverse} framework~\cite{hristovArrangeTraverseAlgorithm2025}. 
Our method exploits the fact that, in the bivariate case, only singular edges contribute to the structure of Reeb space, allowing us to ignore many regular edges~\cite{tiernyJacobiFiberSurfaces2017}.
This observation results in substantial efficiency gains on datasets where most edges are regular, which is common in many numerical simulations of physical systems.
We provide an implementation of our method and benchmark it against the original \emph{arrange and traverse} algorithm, showing performance gains of up to four orders of magnitude on real-world datasets.
\end{abstract}

\section{Introduction}
\label{sec:introduction}

Visualizing and analyzing the structure of scalar-valued functions is fundamental across many disciplines, including physics~\cite{sousbiePersistentCosmicWeb2011}, chemistry~\cite{bhatiaTopoMSComprehensiveTopological2018, guntherCharacterizingMolecularInteractions2014}, atmosphere science~\cite{kuhnTopologyBasedAnalysisMultimodal2017, tricocheTopologicalMethodsVisualizing2009}, biology~\cite{ nicolauTopologyBasedData2011, reimannCliquesNeuronsBound2017}, computer graphics~\cite{biasottiReebGraphsShape2008, hachaniSegmentation3DDynamic2014}, and machine learning~\cite{cloughTopologicalLossFunction2022a, naitzatTopologyDeepNeural2020}.
Topological tools such as persistent homology, the Morse-Smale complex, and the Reeb graph are state-of-the-art methods for extracting meaningful features from scalar fields~\cite{heineSurveyTopologybasedMethods2016a}.
Their widespread adoption is largely due to the fact that they can be computed \textit{robustly} and \textit{efficiently}. 


The multivariate setting, where multiple scalar functions are defined over a shared domain, is considerably more complex. In this context, the interrelated structure between the functions cannot be captured adequately by scalar field techniques.  To address this limitation, the Reeb graph can be generalized to the Reeb space~\cite{edelsbrunnerReebSpacesPiecewise2008b}.
The Reeb space is a quotient space that contracts the components of all preimages of points, also called \emph{fibers}.
The Reeb space of a bivariate \pl map over a compact \pl manifold can be represented as a two-dimensional cell complex~\cite{edelsbrunnerReebSpacesPiecewise2008b, hristovHypersweepsConvectiveClouds2022}, where the 2-cells, referred to as sheets, represent features in data.

Four algorithms have been proposed for computing the Reeb space of a \pl map.
The first algorithm~\cite{edelsbrunnerReebSpacesPiecewise2008b} segments the domain into regions where fibers have uniform connectivity and then glues those regions appropriately to form the sheets of the Reeb space.
The Jacobi fiber surfaces algorithm~\cite{tiernyJacobiFiberSurfaces2017} significantly improves practical performance in the bivariate case by computing these regions only for the singular (or Jacobi) edges in the domain, where fibers can change their \revision{connectivity}. 
The third algorithm uses a hierarchical decomposition-based approach~\cite{chattopadhyayAlgorithmFastCorrect2024} to compute the Reeb space of a bivariate \pl map from a sequence of nested Reeb graphs.
Finally, the arrange and traverse algorithm~\cite{hristovArrangeTraverseAlgorithm2025} computes combinatorial descriptions of fibers called preimage graphs and determines their topological correspondence to obtain regions of uniform fiber connectivity that match the sheets of the Reeb space.

While all four Reeb space algorithms have close to quadratic algorithmic complexity in the bivariate case, their implementations suffer from either efficiency or robustness issues.
The arrange and traverse algorithm~\cite{hristovArrangeTraverseAlgorithm2025} provides a robust implementation~\cite{hristov2025} in the bivariate case with CGAL's exact implementation of arrangements of 2D line segments~\cite{fabriDesignCGALComputational2000}. 
However, due to its slow running time, it cannot handle real-life datasets without significant downsampling.
The Jacobi fiber surfaces algorithm~\cite{tiernyJacobiFiberSurfaces2017} offers better performance, but, as has been noted~\cite{chattopadhyayAlgorithmFastCorrect2024}, its implementation is not exact without additional care in handling Jacobi fiber surface intersections.
This paper aims to combine the strengths of these two approaches to provide a Reeb space algorithm with an implementation that is both efficient and robust.

We propose the \textit{singular arrange and traverse} algorithm for computing the Reeb space of a generic bivariate \pl map $f : |K| \to \real{2}$, defined over a triangulation of a 3-manifold $K$.
This is a specialization of the arrange and traverse algorithm that does not require the arrangement of the images of all edges of $K$ in the range, only the singular ones, where the \revision{connectivity} of the fibers can change.
Using red-blue line segment intersection on the images of regular and singular edges, our algorithm computes the Reeb space from a subset of all preimage graphs, which we call \emph{essential}.
The worst-case running time of our algorithm is $O(N_s N_t \log N_t)$, where $N_s$ is the number of singular edges and $N_t$ is the number of triangles.
\revision{
This is a significant improvement because datasets with a low \revision{proportion} of singular edges are common in practice.
We provide an exact implementation and benchmark it against the arrange and traverse algorithm, the only other exact implementation known to the authors, achieving up to four orders of magnitude improvement in running time and memory footprint.}

This paper is structured as follows: in \cref{sec:preliminaries}, we introduce the background relevant to Reeb spaces of \pl maps; in \cref{sec:algorithm}, we present the singular arrange and traverse algorithm; in \cref{sec:proofs}, we prove its correctness and worst-case complexity; in \cref{sec:implementation} we describe our implementation, which we benchmark in \cref{sec:evaluation}. We conclude in \cref{sec:conclusion}.

\section{Preliminaries}
\label{sec:preliminaries}

Let $K$ be a triangulation of a $3$-manifold and let $f: |K| \to \real{2}$ be a \revision{piecewise linear (\pl) map~\cite{rourkeIntroductionPiecewiseLinearTopology1982, edelsbrunnerComputationalTopology2009}; see \cref{fig:domain-regular-arrangement}}.
We will refer to $f$ as a \emph{bivariate} \pl map because it can be decomposed into two scalar \pl maps $f = (f_1, f_2)$, where $f_1, f_2 : |K| \to \real{}$.
We assume that $f$ is \emph{generic}, in the sense that the images of no three vertices are collinear and \revision{the images of no three edges intersect at a single point in their interiors}~\cite{hristovRobustGeometricPredicates2025}.
We refer to the images of edges under $f$ in the range as \emph{segments} and to the image of vertices under $f$ as \emph{vertex points}.
For an edge $e$ with endpoints $a$ and $b$, we will also refer to $e$ as $ab$.

\begin{figure}[ht]
  \centering
  \subfloat[Tetrahedral mesh.]{
    \includegraphics[width=0.30\textwidth]{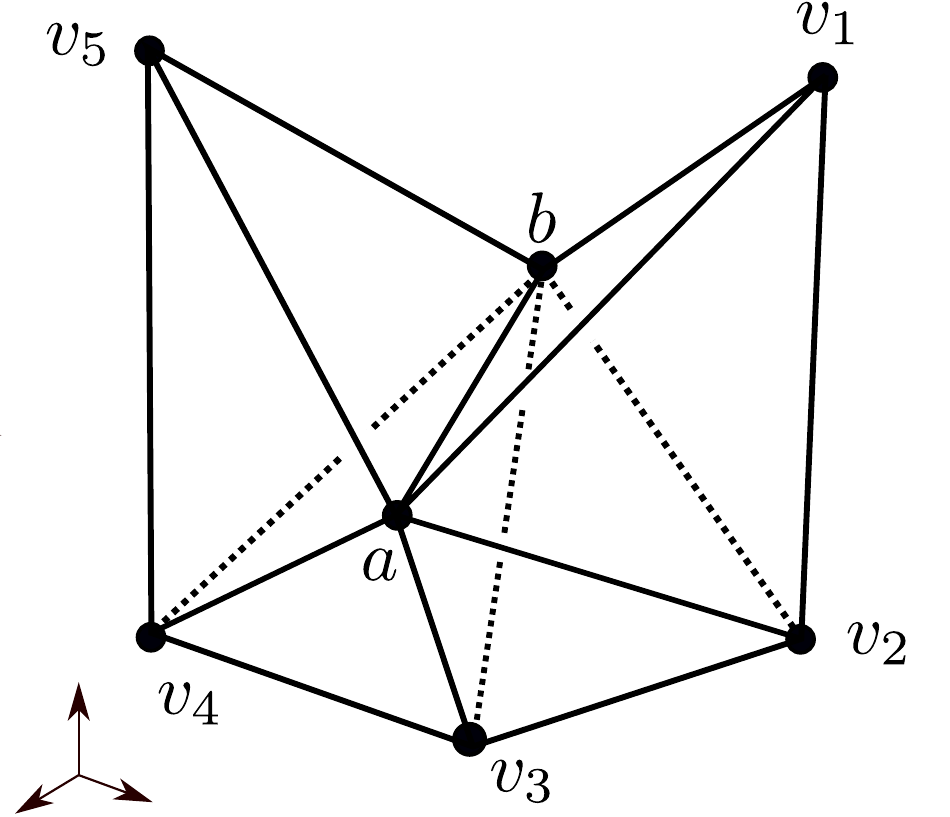}
  }
  \quad
  \captionsetup{textformat=simple}
  \subfloat[Image in the range.]{
    \includegraphics[width=0.30\textwidth]{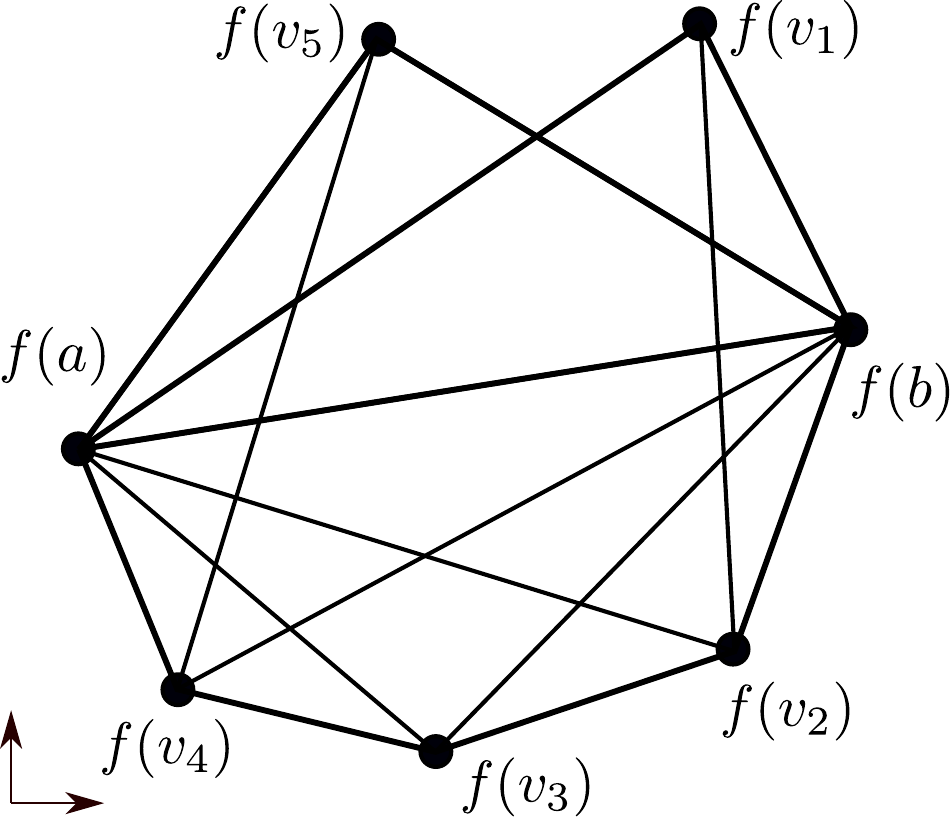}
    \label{fig:domain-regular-arrangement-sub2}
  }
  \caption{Example tetrahedral mesh (a) and its image in the range under a bivariate \pl map (b).}
  \label{fig:domain-regular-arrangement}
\end{figure}

The \emph{upper link} of an edge $ab$ is a sub-complex of the link of $ab$ induced by the vertices $v$ in the link, whose vertex points $f(v)$ have a positive orientation with respect to $f(a)$ and $f(b)$~\cite{hristovHypersweepsConvectiveClouds2022}.
The \emph{lower link} is defined analogously by the vertex points with negative orientation.
\revision{
In general we will assume that the endpoints of an edge or a segment are ordered lexicographically in order to establish a canonical orientation.
For example, the upper link of $ab$ in \cref{fig:domain-regular-arrangement} is $\{v_1, v_5\}$ and the lower link is $\{v_2, v_3, v_4, v_2v_3, v_3v_4\}$.
}

An edge is \emph{regular} if both its lower and upper links have one \revision{simply} connected component.
Otherwise, that edge is \emph{singular}, or Jacobi~\cite{edelsbrunnerJacobiSetsMultiple2002}.
A vertex is \emph{singular} if it is incident to at least one singular edge and \emph{regular} otherwise.
We call the images of singular edges and vertices in the range \emph{singular segments and vertex points} respectively.
The collection of singular edges and vertices is a one-dimensional subcomplex of $K$, which we call the \emph{singular set} $S_f$.

A \emph{fiber} is the preimage of a point $f^{-1}((x_1,x_2)) \subset |K|$ for $(x_1, x_2) \in \real{2}$, which we call a \emph{fiber point} (see \cref{fig:example-fibers-reeb-space}).
A simplex intersected by a fiber is called \emph{active} \revision{with respect to that fiber}.
We refer to the connected components of fibers as \emph{fiber components}.
Since fiber components can change their connectivity only at the singular set~\cite{ hristovHypersweepsConvectiveClouds2022, tiernyJacobiFiberSurfaces2017}, those that intersect $S_f$ are called \emph{singular}, and those that do not are called \emph{regular}.
\revision{Regular fiber components are polylines, while singular fiber components are either an isolated point or a number of polylines glued together at exactly one or two points~\cite{hristov2025}.}
Fibers with only regular components are \emph{regular}, otherwise, they are \emph{singular}.

Singular edges, analogous to critical points in the univariate case, can be categorized in the following way: \emph{definite} edges create or destroy fiber components (analogous to minima and maxima) and \emph{indefinite} edges merge or split fiber components (analogous to saddles)~\cite{hristovHypersweepsConvectiveClouds2022, hristovArrangeTraverseAlgorithm2025}.
Indefinite edges that split or merge exactly two components are referred to as \emph{simple}.
The singular set is called \emph{simple}~\cite{ edelsbrunnerReebSpacesPiecewise2008b, chattopadhyayAlgorithmFastCorrect2024} when it is a collection of \revision{polylines} and all indefinite edges are simple.
Non-simple singular edges can be unfolded into multiple simple ones (like monkey saddles), \revision{but that requires the addition of new simplices to the input mesh}~\cite{edelsbrunnerJacobiSetsMultiple2002}.

The \emph{Reeb space} $R_f$ of a \pl map $f$ is a topological structure that contracts each fiber component to a single point.
It is the quotient space of $|K|$ with respect to the equivalence relation $\sim$, where $x \sim y$ if and only if $x$ and $y$ belong to the same fiber component.
The Reeb space is a part of the Stein factorization $f = \pi \circ q$, where $q : |K| \to R_f$ is the quotient map with respect to $\sim$ and $\pi$ is the natural map such that the diagram commutes~\cite{mottaStableMaps3Manifolds1995}.
For computational purposes, the Reeb space of a bivariate \pl map can be represented as a two-dimensional cell complex whose 2-cells, referred to as \emph{sheets}, are glued together along the image of the singular set under $q$~\cite{chattopadhyayExtractingJacobiStructures2014}.
For an example see \cref{fig:example-fibers-reeb-space}.

\begin{figure}[ht]
  \centering
  \includegraphics[width=1.00\textwidth]{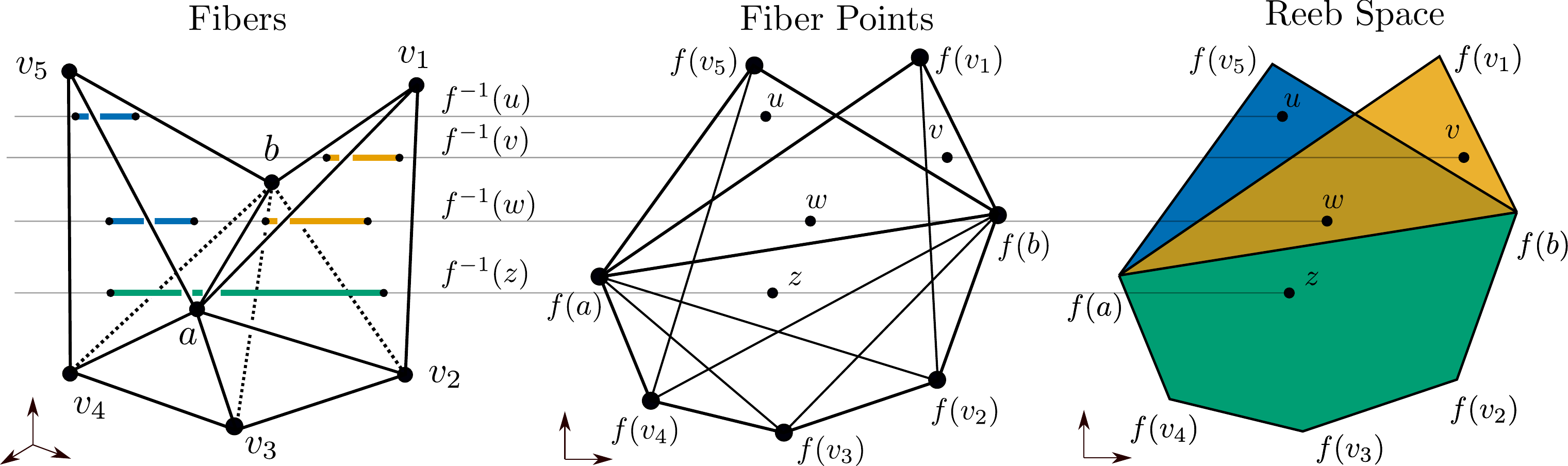}
  \caption{
      Examples of the change of connectivity of four fibers at the indefinite edge $ab$ and the definite edges $av_1$ and $bv_5$.
      The Reeb space of $f$ has three sheets (blue, yellow and green) -- one for each fiber component.
      To visualize the Reeb space we map its sheets to the range and overlay them.
  }
  \label{fig:example-fibers-reeb-space}
\end{figure}

\subsection{Arrange and Traverse Algorithm}
\label{sec:background-arrange-and-traverse}

\revision{The Reeb space of a \pl map can be computed with four different approaches, which we outlined in \cref{sec:introduction}.
We will review the \emph{arrange and traverse} algorithm~\cite{hristovArrangeTraverseAlgorithm2025} in detail because it forms the foundation for our new approach.}
The \emph{arrange and traverse} algorithm computes the Reeb space by combinatorially representing fibers and tracking how their connectivity changes as they intersect the edges of $K$. 
Fibers are represented combinatorially with a preimage graph, similar to Parsa's optimal Reeb graph algorithm~\cite{parsa2012}.
\revision{The \emph{preimage graph} of a fiber $f^{-1}((x_1,x_2)) \subset |K|$ for a point $(x_1, x_2) \in \real{2}$, which is not in the image of an edge or vertex,} is a graph whose vertices are the active triangles with respect to that fiber and whose edges connect triangles which are both the faces of the same tetrahedron.

In the first stage of this algorithm, the edges of $K$ are mapped to the range \revision{via $f$}, and their geometric arrangement $A$ is computed (see \cref{sec:background-intersections}).
Within each face $F$ of $A$, all fibers have the same active triangles, so they can all be represented by a single \emph{preimage graph}.
In the second stage, the faces of $A$ are traversed in order to compute their preimage graphs.
\revision{The difference in the connectivity of the preimage graphs} of two faces $F_1$ and $F_2$, which are incident via a segment $f(ab)$, where $ab$ is an edge of $K$, is locally determined by the upper and lower link of $ab$.
\revision{
Assume that we cross from $F_1$ to $F_2$ via $f(ab)$ by crossing from the half-plane with negative orientation with respect to $f(a)$ and $f(b)$ to the half-plane with positive orientation. 
Then all triangles $abv$, where $v$ is in the lower link of $ab$, are removed from the preimage graph of $F_1$, and all triangles $abv'$, where $v'$ is in the upper link of $ab$, are added to the preimage graph of $F_1$ to form the preimage graph of $F_2$.
For the opposite direction of travel, all triangles $abv$ are added and all triangles $abv'$ are removed.}

Once all preimage graphs are computed, the \emph{correspondence graph} $H$ is constructed.
The vertices of the correspondence graph are the fiber components of the preimage graphs of all faces.
Edges connect vertices representing fiber components when there is a correspondence between them -- that is, when a fiber component passes from one face to another without \revision{any change in connectivity}.
Each sheet in the Reeb space is a collection of faces such that the connected components of the preimage graphs of those faces are in the same connected component in the correspondence graph.
The adjacency of the sheets in the Reeb space is determined by the adjacency of the faces that comprise them.

For an example of stages of the arrange and traverse algorithm refer to \cref{fig:arrange-and-traverse-example}.
In the bivariate case, the only geometric computations performed by this algorithm are in 2D -- computing the arrangement and orientation tests for upper and lower links.
The rest of the computation is entirely combinatorial.
This is a significant advantage over other Reeb space algorithms, as it allows the use of only 2D geometric predicates~\cite{hristovRobustGeometricPredicates2025}.
An exact open-source implementation using CGAL can be found on GitHub~\cite{hristov2025}.
Finally, this algorithm can be extended to higher dimensional domains $K^d$ and ranges $\real{k}$ such that $d > k$.

\begin{figure}[ht]
  \centering
  \subfloat[Arrangement traversal.]{
    \includegraphics[width=0.30\textwidth]{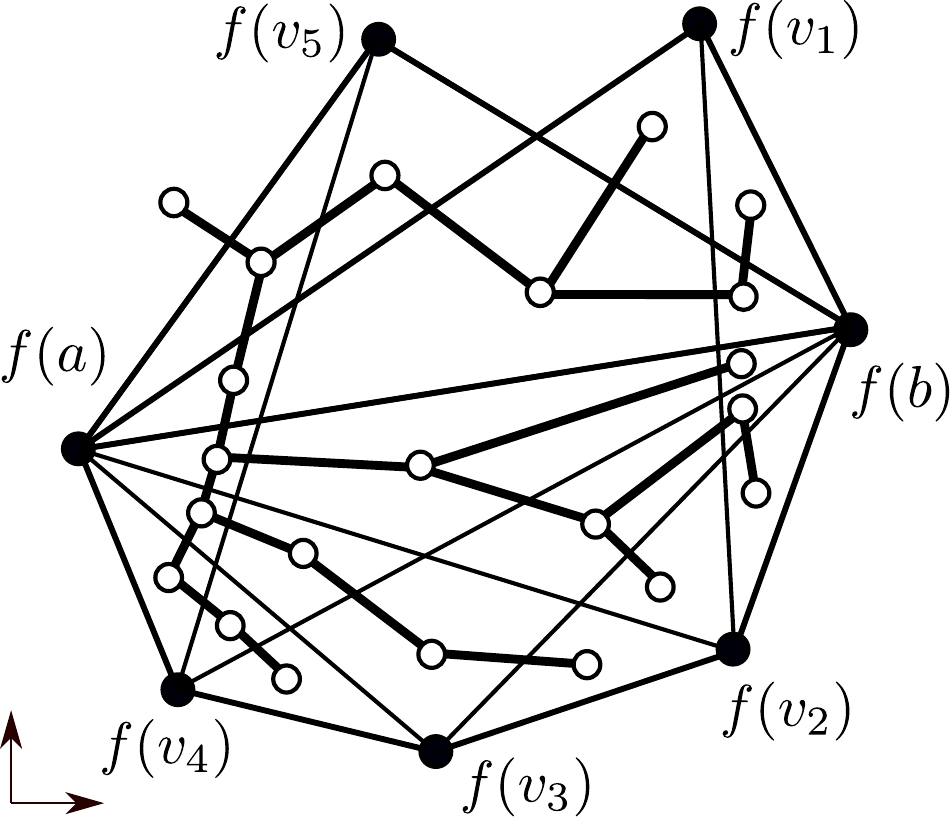}
  }
  \hfill
  \captionsetup{textformat=simple}
  \subfloat[Correspondence graph.]{
    \includegraphics[width=0.30\textwidth]{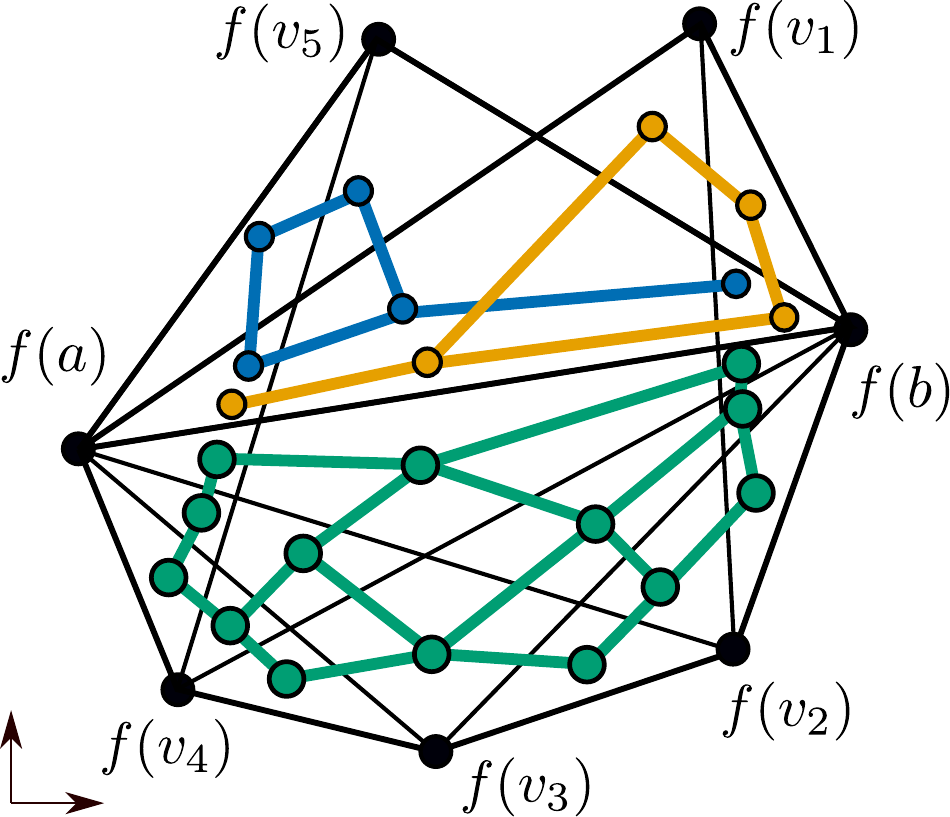}
    \label{fig:arrange-and-traverse-example-h}
  }
  \hfill
  \captionsetup{textformat=simple}
  \subfloat[Reeb space.]{
    \includegraphics[width=0.30\textwidth]{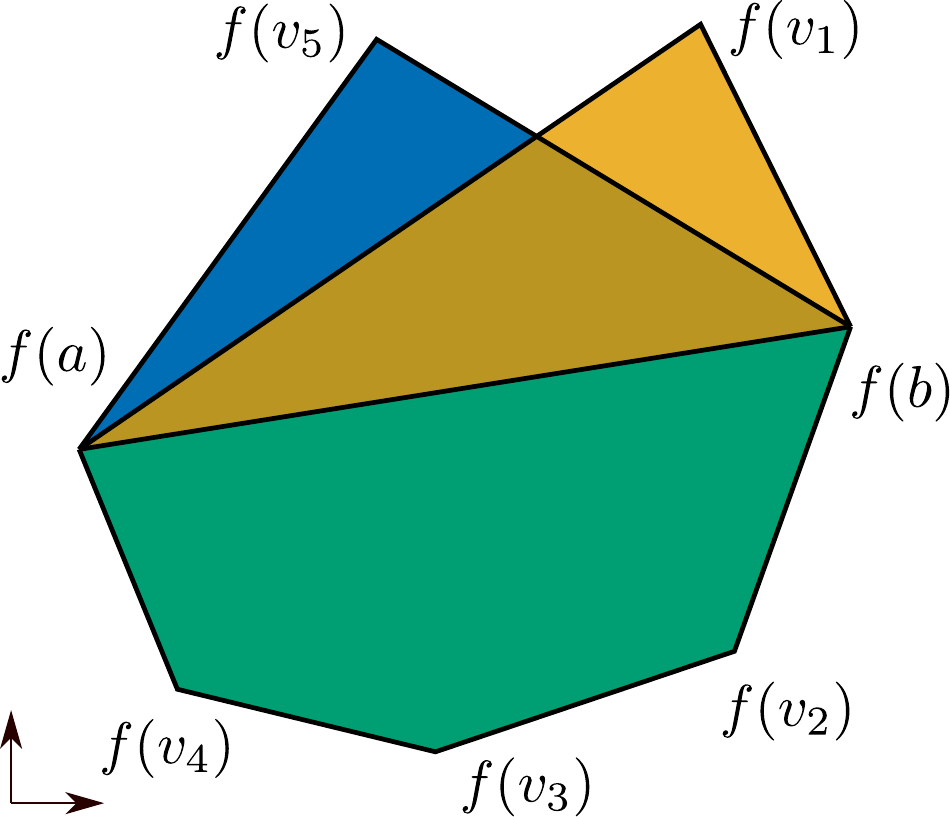}
  }
  \caption{The key stages of the arrange and traverse algorithm~\cite{hristovArrangeTraverseAlgorithm2025} for \cref{fig:domain-regular-arrangement}.}
  \label{fig:arrange-and-traverse-example}
\end{figure}

\subsection{Segment Intersections and Arrangements}
\label{sec:background-intersections}

\revision{
The intersections of $n$ line segments in the plane can be computed in time $O(n\log{n} + k)$, where $k$ is the number of intersections~\cite{chazelleOptimalAlgorithmIntersecting1992b, balaban1995optimal}.
A subdivision of the plane by segments into vertices, edges, and faces is called an \emph{arrangement} and can be computed in $O(n\log{n} + k\log{n})$ time~\cite{debergComputationalGeometryAlgorithms2008, goodmanHandbookDiscreteComputational2018} and represented with a \emph{doubly connected edge list (DCEL)}.
The special case of reporting the intersections between two sets of segments, typically labeled \emph{red} and \emph{blue}, can be solved in $O(n\log{n} + t)$ time, where $t$ is the number of intersections between the two sets.}

\section{Singular Arrange and Traverse Algorithm}
\label{sec:algorithm}

\revision{
The \emph{key idea} of the algorithm we propose is that \emph{not all faces and preimage graphs are needed} to compute the Reeb space in the arrange and traverse algorithm (see \cref{sec:background-arrange-and-traverse}). 
The only faces where the connectivity of fibers can change are the ones that are incident to at least one singular segment or singular vertex point.
We call these faces \emph{essential} because they outline a part of the boundary of at least one sheet of the Reeb space.
More precisely, they outline the images of sheets in the plane under $\pi$.
We refer to all other faces as \emph{nonessential} because they do not contribute to the shape of the sheets.
Our goal is to compute the Reeb space by \emph{using only the essential faces} and their preimage graphs.}








\subsection{Definitions}
Let $K$ be a triangulation of a $3$-manifold and let $f : |K| \to \real{2}$ be a generic \pl map. 
Let $S_f$ be the singular set of $f$ and let $R_f$ be the Reeb space of $f$ (see \cref{sec:preliminaries}).
We \emph{do not} assume that the singular set is simple.
For clarity of exposition, we will assume that $f(S_f)$ is connected -- we will deal with the case where it is not in \cref{sec:nested-faces}.
We will refer to preimage graphs as \emph{fiber graphs}, since fibers are defined as preimages of points under $f$.
Let $A = \textnormal{Arrangement}(f(K^{(1)}))$, where $K^{(1)}$ is the 1-skeleton of $K$, be the arrangement of all segments, and let $\bar{A} = \textnormal{Arrangement}(f(S_f))$ be the arrangement of all singular segments.
We will refer to $A$ and $\bar{A}$ as the \emph{full} and \emph{singular} arrangement respectively (see \cref{fig:singular-domain-arrangement}).

\begin{figure}[ht]
  \centering
  \subfloat[Singular edges (in bold).]{
    \includegraphics[width=0.30\textwidth]{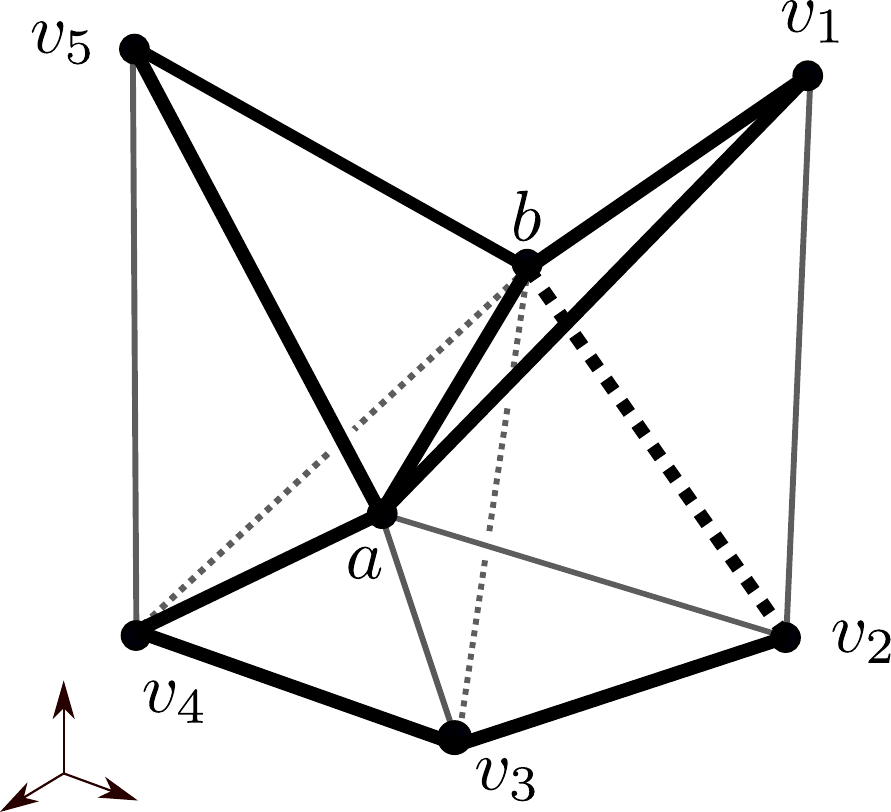}
  }
  \captionsetup{textformat=simple}
  \subfloat[Singular segments (in bold).]{
    \includegraphics[width=0.30\textwidth]{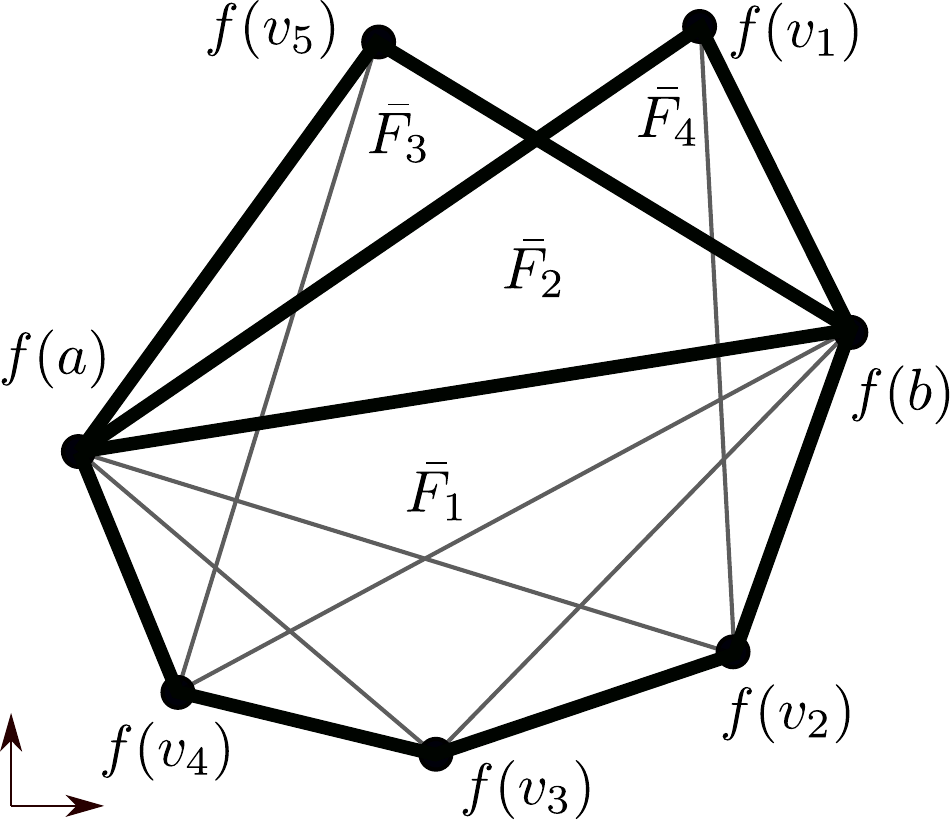}
  }
  \caption{
      The singular set (a) of \cref{fig:domain-regular-arrangement} and its image in the range (b).
      The arrangement $\bar{A}$ of the singular segments has four bounded faces, each subdivided by the faces of the full arrangement.
  }
  \label{fig:singular-domain-arrangement}
\end{figure}
Consider a face of the singular arrangement $\bar{F} \in \bar{A}$.
The fibers that correspond to points in the interior of $\bar{F}$ are homeomorphic to one another because they are all regular -- no change in connectivity happens to any of them.
Since $\bar{A}$ is an arrangement of a subset of the segments used in $A$, $\bar{F}$ is subdivided into a set of faces in $A$ such that $\bar{F} = \cup_{i}{F_i}$, where $F_i \in A$.
We describe the connectivity of fibers inside $\bar{F}$ by grouping the fiber graphs of all faces that subdivide $\bar{F}$ into a fiber graph class.

\begin{definition}[Fiber Graph Class]
    Let $\bar{F}$ be a face of the singular arrangement $\bar{A}$ and let $\{{F_{i}}\}_{i}$ be the set of faces of the full arrangement that are in $\bar{F}$.
    The fiber graph class $\bar{G}_{\bar{F}}$ of $\bar{F}$ is the set of fiber graphs $\bar{G}_{\bar{F}} = \{{G_{F_{i}}}\}_{i}$, where $G_{F_i}$ is the fiber graph of $F_{i}$.
\end{definition}

Fiber graph classes allow us to \emph{bundle} together fibers with the same topology to examine their connectivity in bulk.
For an example see \cref{fig:example-fiber-graph-class}.
In order to examine that connectivity, we define the notion of a component of a fiber graph class.

\begin{definition}[Fiber Graph Class Component]
    \revision{
    Let $\bar{G}_{\bar{F}} = \{{G_{F_{i}}}\}_{i}$ be the fiber graph class of a face $\bar{F}$ of $\bar{A}$.
    A component $\bar{C}$ of a fiber graph class $\bar{G}_{\bar{F}}$ is a set 
    $\{{C_i}\}_{i}$, where each $C_i$ is a connected component of $G_{F_i}$, such that any two connected components $C_i$ and $C_j$ of $\bar{C}$ correspond to each other (see correspondence in \cref{sec:background-arrange-and-traverse}).}
\end{definition}

The connected components of fiber graphs within a single fiber graph class component are in the same connected component of the correspondence graph $H$.
In order to determine how the components of fiber graph classes change their topology across the faces of the singular arrangement we define their correspondence.

\begin{definition}[Fiber Graph Class Component Correspondence]
\label{def:class-correspondence}
    \revision{
    Two components $\bar{C}$ and $\bar{C'}$ of two fiber graph classes $\bar{G}_{\bar{F}}$ and $\bar{G}_{\bar{F'}}$ correspond to each other if there exist two connected components $C \in \bar{C}$ and $C' \in \bar{C'}$ that correspond to each other.}
\end{definition}

\begin{figure}[ht]
  \centering
  \includegraphics[width=1.00\textwidth]{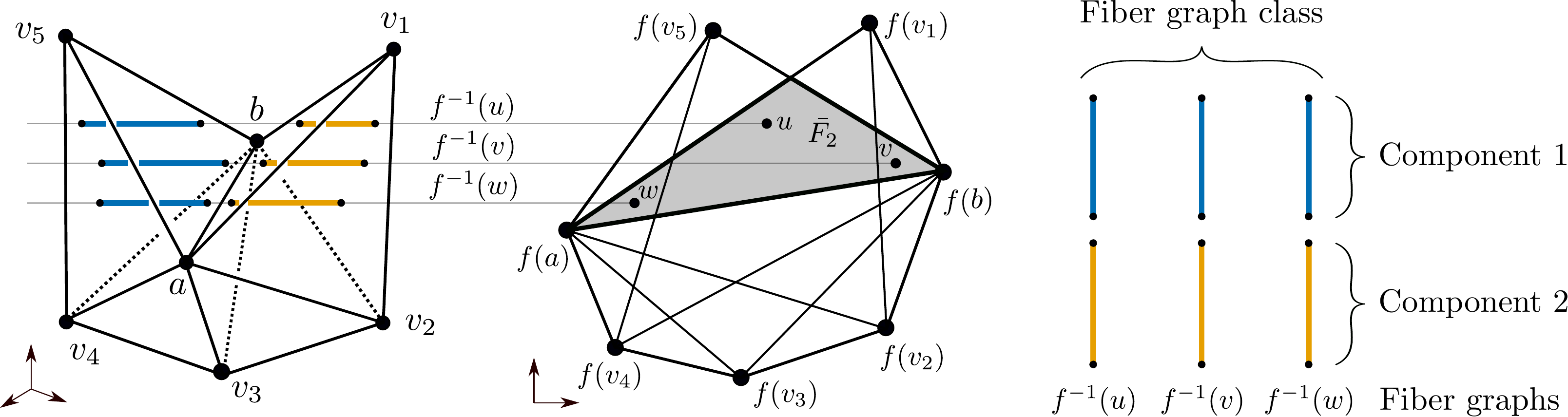}
  \caption{
      The fiber graph class of the face $\bar{F_2}$ is the collection of the fiber graphs of faces of the full arrangement that are in $\bar{F_2}$.
      All yellow (and respectively blue) fiber components correspond to one another; they change continuously from one to another as a fiber point moves around $\bar{F_2}$.
      All corresponding fiber graph components form a component of the fiber graph class of $\bar{F_2}$.
  }
  \label{fig:example-fiber-graph-class}
\end{figure}

We can describe the connectivity of fiber graph class components across all faces of the singular arrangement with the singular correspondence graph.

\begin{definition}[Singular Correspondence Graph]
    \revision{The singular correspondence graph $\bar{H}$ is a graph whose vertices are the set of all fiber graph class components and whose edges connect corresponding ones (see \cref{def:class-correspondence})}.
\end{definition}

The sheets of the Reeb space have a one-to-one matching with the connected components of the singular correspondence graph.
The geometry of a sheet of the Reeb space, or its image in the plane via $\pi$, is then the union of all faces $\cup_i\bar{F}_i$ whose fiber graph classes have components that span the connected component of $\bar{H}$ that matches that sheet.

By reframing the computation of the Reeb space in terms of fiber graph classes, their components, and the singular correspondence graph $\bar{H}$, we can optimize running time and memory efficiency in several ways.
First, we can completely skip the computation of all nonessential fiber graphs because the correspondence between two fiber graph class components can only be established with components of essential fiber graphs.
Second, we can save a significant amount of memory, which is the primary bottleneck of the arrange and traverse algorithm, by representing every fiber graph class component as a single vertex in $\bar{H}$.
Finally, since all essential faces are incident to singular segments, we do not have to compute the full arrangement $A$, only the singular one $\bar{A}$.
Then we can identify the essential faces by finding the intersections between $\bar{A}$ and the regular segments.

\subsection{Algorithm Description}
\label{sec:algorithm-desc}

We describe the \emph{singular arrange and traverse algorithm} in three stages.
First, we compute all essential faces in the \emph{geometric preprocessing stage}.
Then, we compute their fiber graphs and the singular correspondence graph in the \emph{topological traversal stage}.
Finally, we compute the geometry of the sheets and their adjacency in the \emph{geometric postprocessing stage}.

\paragraph*{Stage I. Geometric Preprocessing}
In the first stage, we compute all essential faces \emph{implicitly} without using the full arrangement $A$.
To do so, we compute the singular set $S_f$, then map it to the range to compute the singular arrangement $\bar{A}$, and finally we compute the intersection of $\bar{A}$ and the regular segments.
We store all regular segments that intersect the boundary $\partial \bar{F}$ of each face $\bar{F}$ of $\bar{A}$ in a \emph{circular counterclockwise ordered list} $Q_{\bar{F}}$.

When two regular segments intersect the interiors of two different half-edges or vertices of $\bar{F}$ we use the counterclockwise order defined on the boundary of $\bar{F}$ by the DCEL data structure of $\bar{A}$.
When two regular segments intersect the interior of the same half-edge of $\bar{F}$, we compare the barycentric coordinates of the points of intersection.
When two regular segments share a vertex of $\bar{F}$ as a common endpoint, we use their radial ordering. 
We use our running example to illustrate this in \cref{fig:example-singuar-red-blue}, where $Q_{\bar{F}_2} = \{f(v_1v_2), f(v_1v_2), f(v_4v_5), f(v_4v_5)\}$.

Each essential face $F \subseteq \bar{F}$ is implicitly represented with the part of its boundary that intersects $\partial \bar{F}$.
\revision{This can be identified by pairs of consecutive regular segments in $Q_{\bar{F}}$.}
Note also that an essential face may be represented multiple times when  $\partial F \cap \partial \bar{F}$ has multiple connected components.
Such is the case for the essential face whose boundary vertices include $i_1, i_2, i_3$ and $i_4$ in the middle of face $\bar{F_2}$ in \cref{fig:singular-domain-arrangement-blue-red}.

\begin{figure}[ht]
  \centering
  \subfloat[Singular arrangement.]{
    \includegraphics[width=0.30\textwidth]{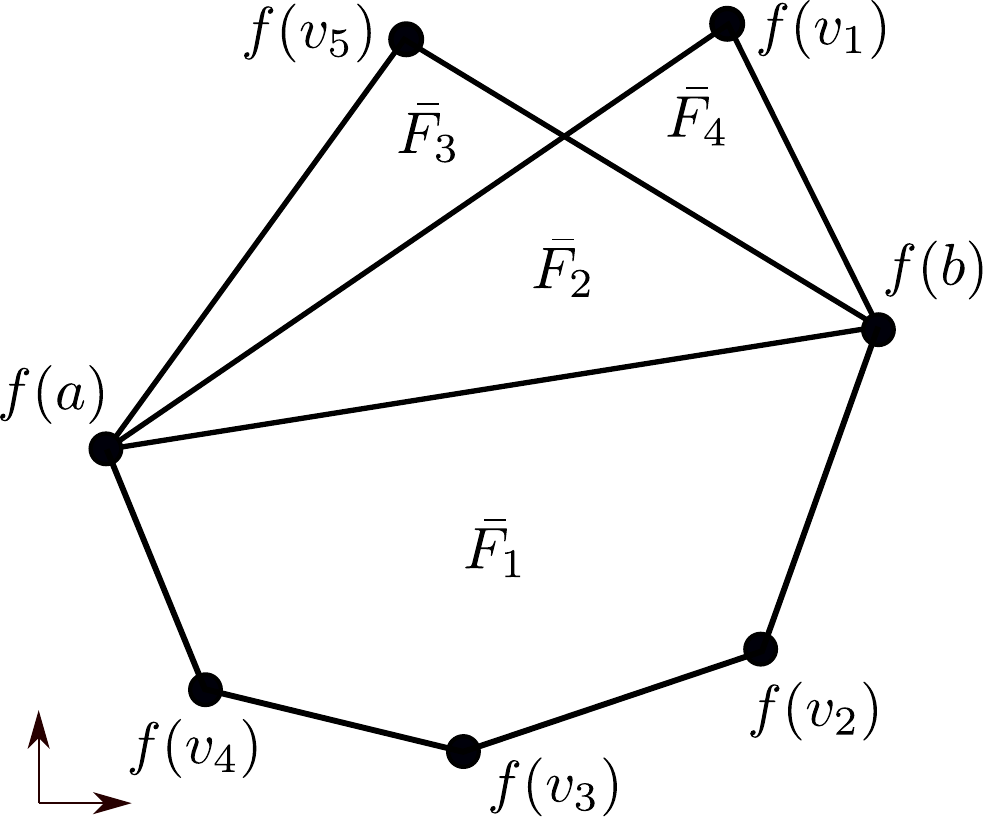}
  }
  \hfill
  \captionsetup{textformat=simple}
  \subfloat[Essential faces (shaded).]{
    \includegraphics[width=0.30\textwidth]{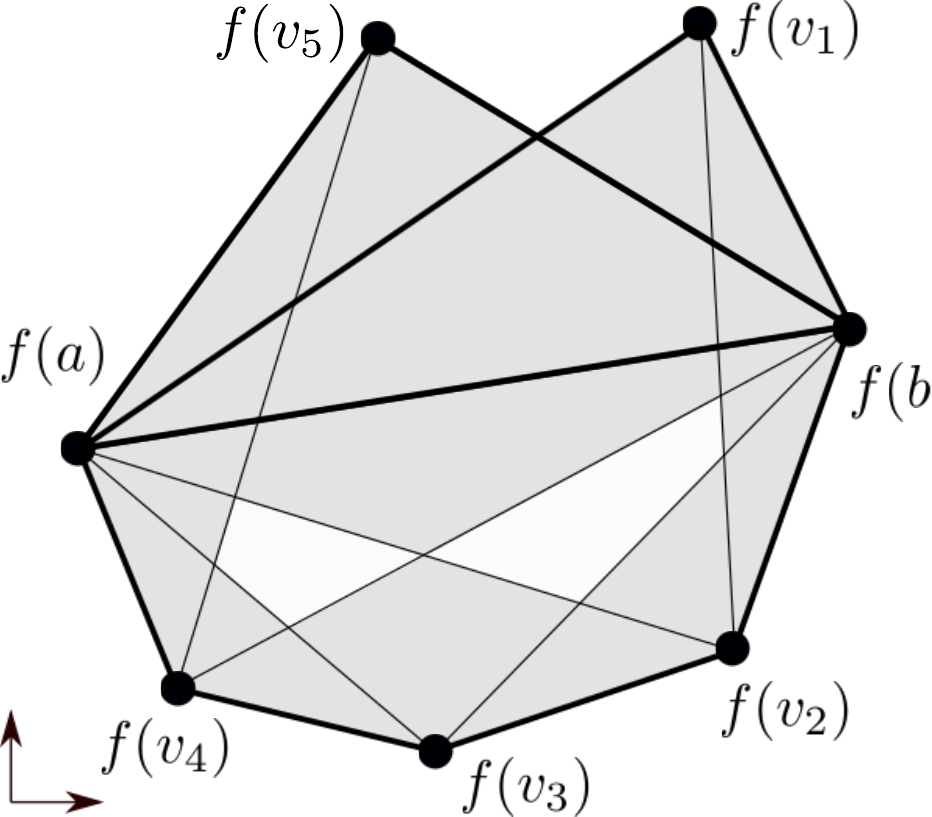}
  }
  \hfill
  \subfloat[Regular segment intersections.]{
    \includegraphics[width=0.30\textwidth]{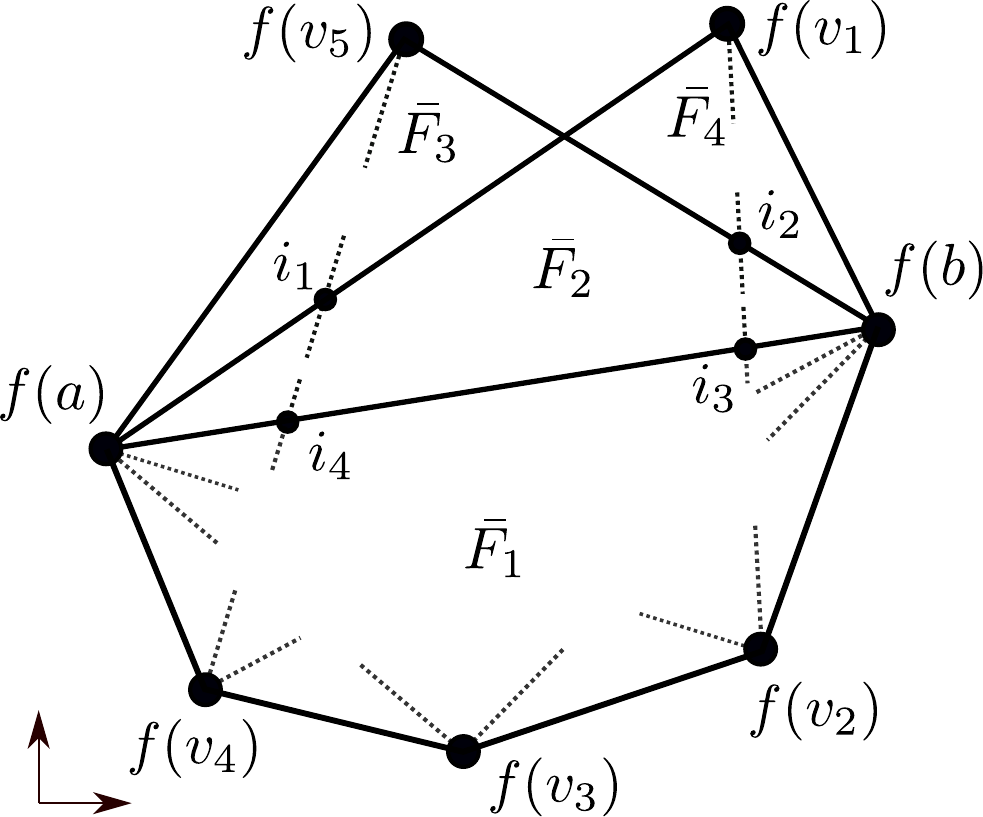}
    \label{fig:singular-domain-arrangement-blue-red}
  }
  \caption{
      Our algorithm computes the arrangement $\bar{A}$ of all singular segments (a), then idenfities all essential faces (b) by computing the intersections between $\bar{A}$ and all regular segments (c).
  }
  \label{fig:example-singuar-red-blue}
\end{figure}

\paragraph*{Stage II. Topological Traversal}
In the second stage, we traverse the faces of the singular arrangement along their shared segments \revision{using breadth-first search (BFS)} to compute the essential fiber graphs and the singular correspondence graph $\bar{H}$.
Starting with the unbounded face outside $\bar{A}$, whose fiber graph is empty, we visit one of its incident bounded faces $\bar{F}$ via a segment $f(ab)$, where $ab$ is an edge in $K$.
The edge $ab$ is definite \revision{(see \cref{sec:preliminaries})} because $f(ab)$ is on the boundary of $\bar{A}$, so we add the triangles $abv$, where $v$ is in the upper (or lower, depending on the direction of travel) link of $ab$, to the first essential fiber graph in the fiber graph class of $\bar{F}$.

Next, we compute all other essential fiber graphs of $\bar{F}$ with a loop operation.
A \emph{loop operation} implicitly loops around the inner boundary of $\bar{F}$ to visit all essential faces in $\bar{F}$ using the order $Q_{\bar{F}}$ we have established in the geometric preprocessing stage.
During the loop of $\bar{F}$, we cross from one essential face to the next via the regular segments listed in $Q_{\bar{F}}$.
\revision{We add and remove the appropriate triangles related to the upper and lower links of the edges that map to those segments (see \cref{sec:background-arrange-and-traverse})}.

\revision{
After we have computed the essential faces of $\bar{F}$, we visit its neighboring faces.
We compute the new fiber graphs by adding and removing triangles according to the upper and lower links of the singular edges that map to shared segments on the boundaries of incident singular faces.
We continue to loop each face as we visit it in the BFS traversal until we have computed the fiber graphs of all essential faces.
For an example of the loop operation, see \cref{fig:example-loop-one-face}, and for an example of the overall traversal, see \cref{fig:singular-arrange-and-traverse}.
}


\revision{
Finally, we construct the singular correspondence graph $\bar{H}$.
Let $\bar{F}$ and $\bar{F'}$ be two singular faces which are incident via the segment $f(ab)$, and let $F$ and $F'$ be any two essential faces in $\bar{F}$ and $\bar{F'}$ respectively, which are also incident via $f(ab)$.
We use the upper and lower links of $ab$ to determine which components in the fiber graphs of $F$ and $F'$ are not affected by the transition from $F$ to $F'$ via $f(ab)$ and establish correspondence between them~\cite{hristovArrangeTraverseAlgorithm2025}.
This allows us to establish correspondence between the components of the fiber graph classes of $\bar{F}$ and $\bar{F'}$.
We do this for all pairs of incident singular faces.
}

\begin{figure}[ht]
  \centering
  \subfloat[Essential fiber graphs of $\bar{F_1}$.\label{fig:example-loop-one-face-a}]{
    \includegraphics[width=0.33\textwidth]{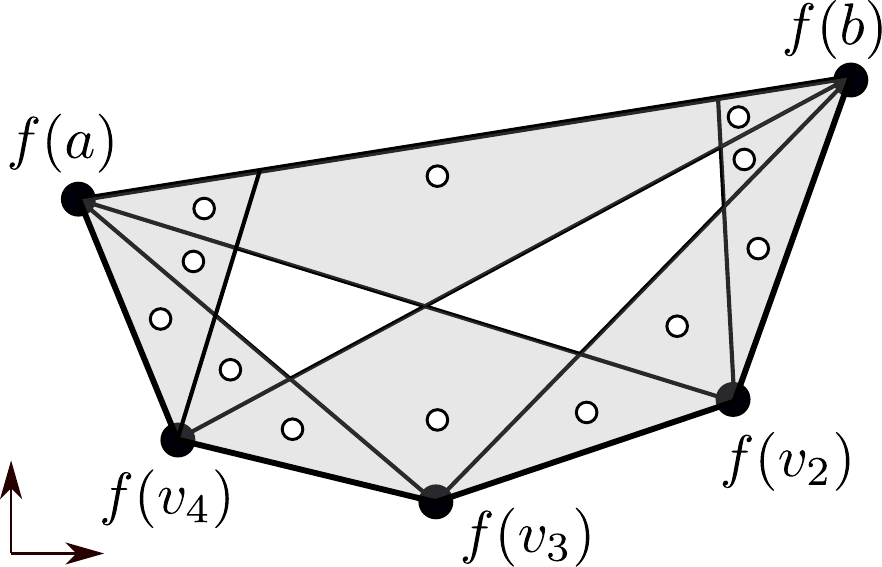}
  }
  \hspace{0.05\textwidth} 
  \captionsetup{textformat=simple}
  \subfloat[Looping all essential fiber graphs.\label{fig:example-loop-one-face-b}]{
    \includegraphics[width=0.33\textwidth]{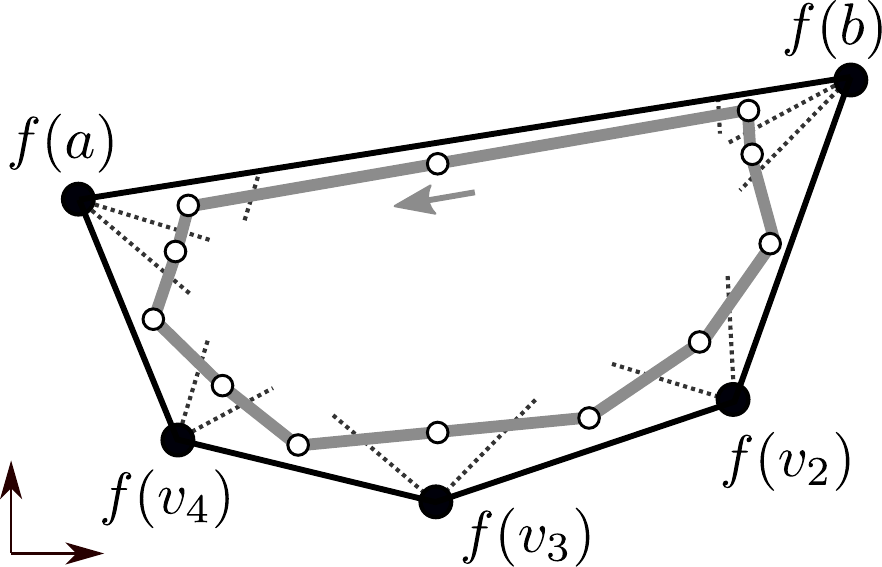}
  }
  \caption{
    Computing all essential fiber graphs of $\bar{F_1}$ from \cref{fig:singular-domain-arrangement} with a loop operation.
  }
  \label{fig:example-loop-one-face}
\end{figure}

\begin{figure}[ht]
  \centering
  \subfloat[Looping $\bar{F_1}$.]{
    \includegraphics[width=0.30\textwidth]{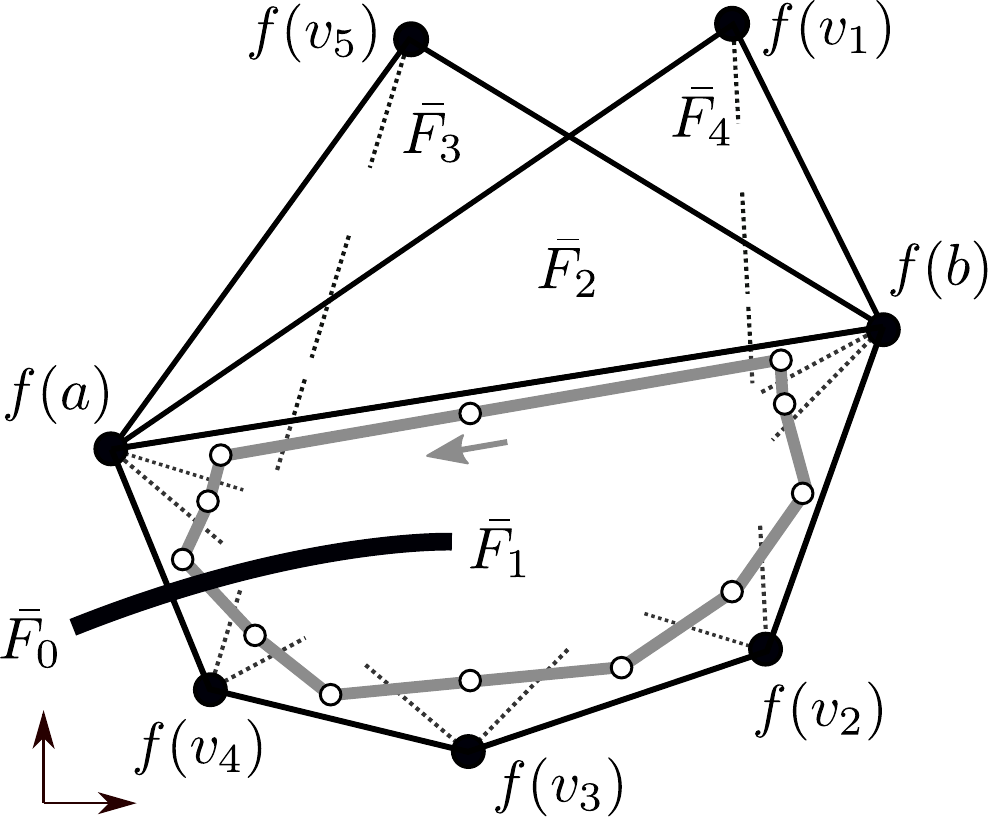}
  }
  \hfill
  \captionsetup{textformat=simple}
  \subfloat[Looping $\bar{F_2}$.]{
    \includegraphics[width=0.30\textwidth]{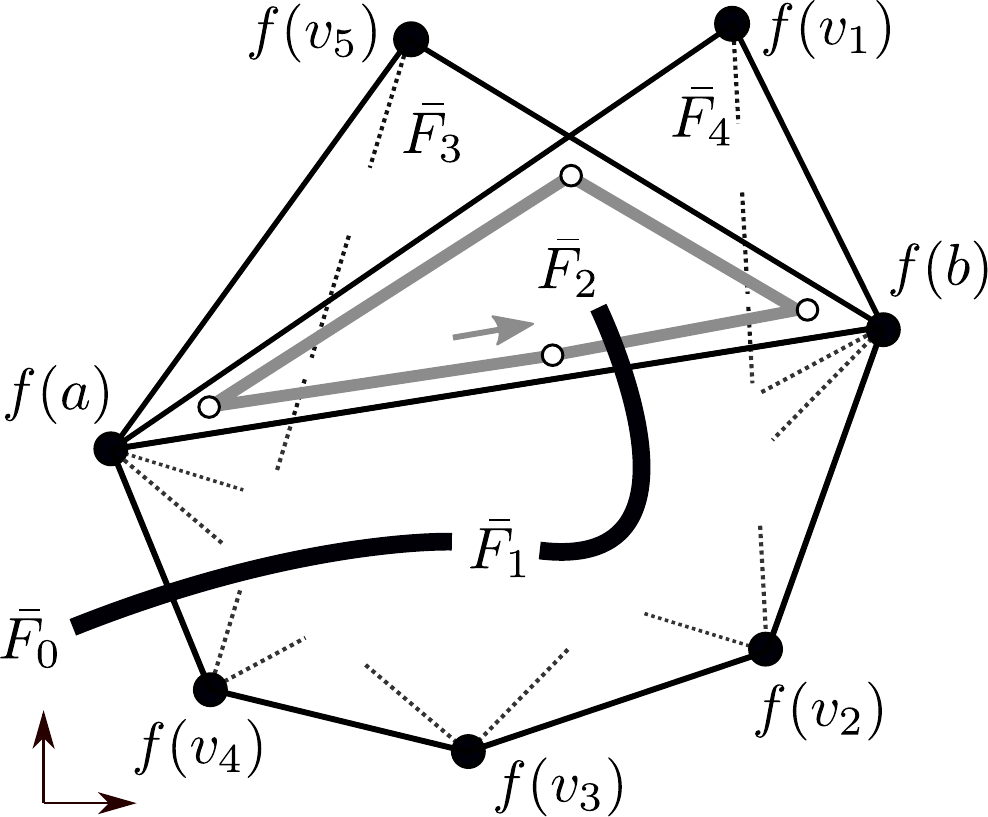}
  }
  \hfill
  \subfloat[Looping $\bar{F_3}$ \& $\bar{F_4}$.]{
    \includegraphics[width=0.30\textwidth]{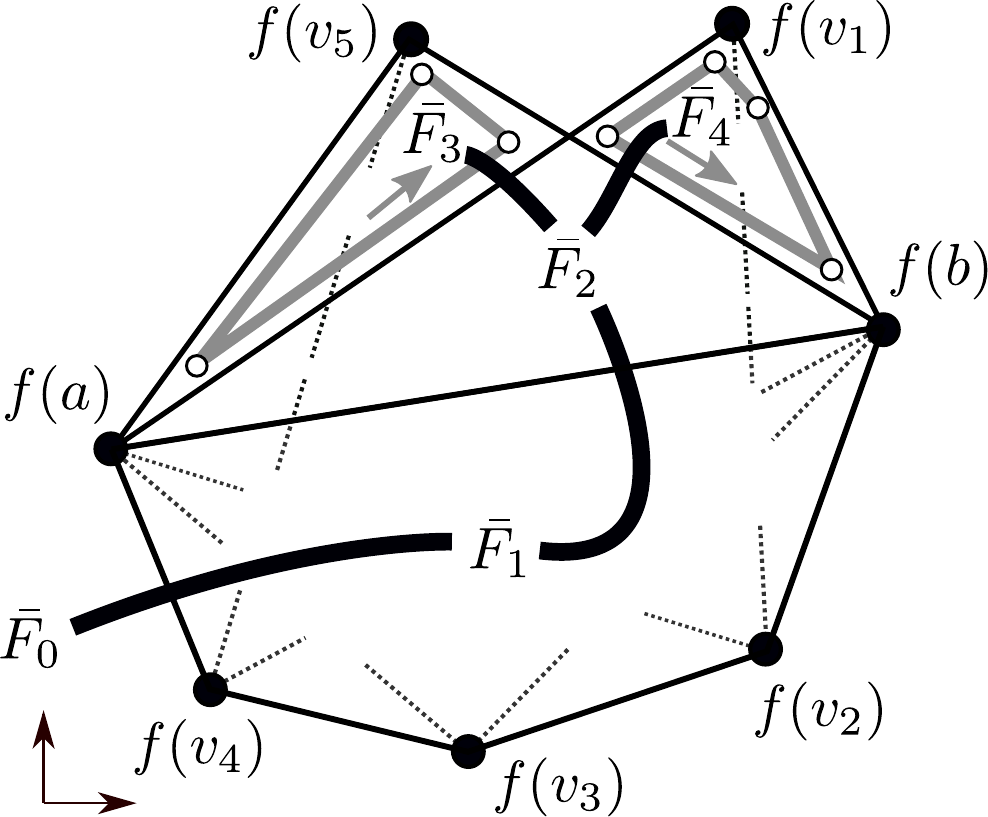}
  }
  \caption{Traversal of the singular arrangement and looping each face.}
  \label{fig:singular-arrange-and-traverse}
\end{figure}

\begin{figure}[ht]
  \centering
  \subfloat[Correspondence graph.]{
    \includegraphics[width=0.31\textwidth]{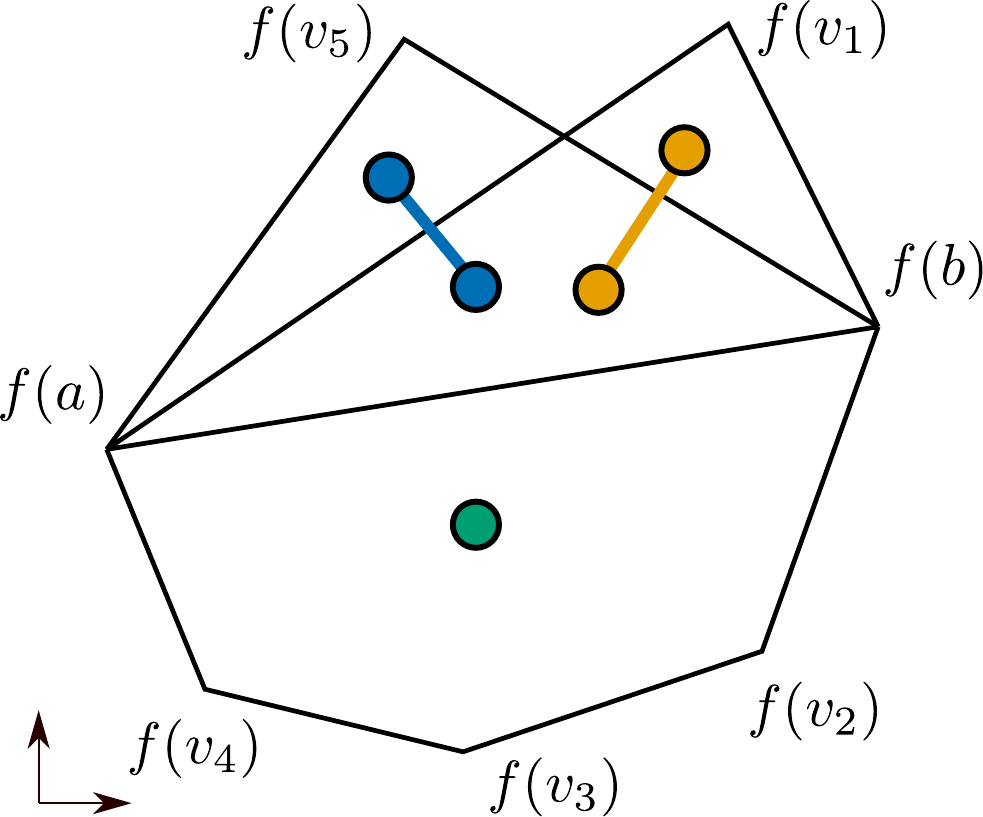}
  }
  \hspace{0.03\textwidth}
  \captionsetup{textformat=simple}
  \subfloat[Reeb space.]{
    \includegraphics[width=0.31\textwidth]{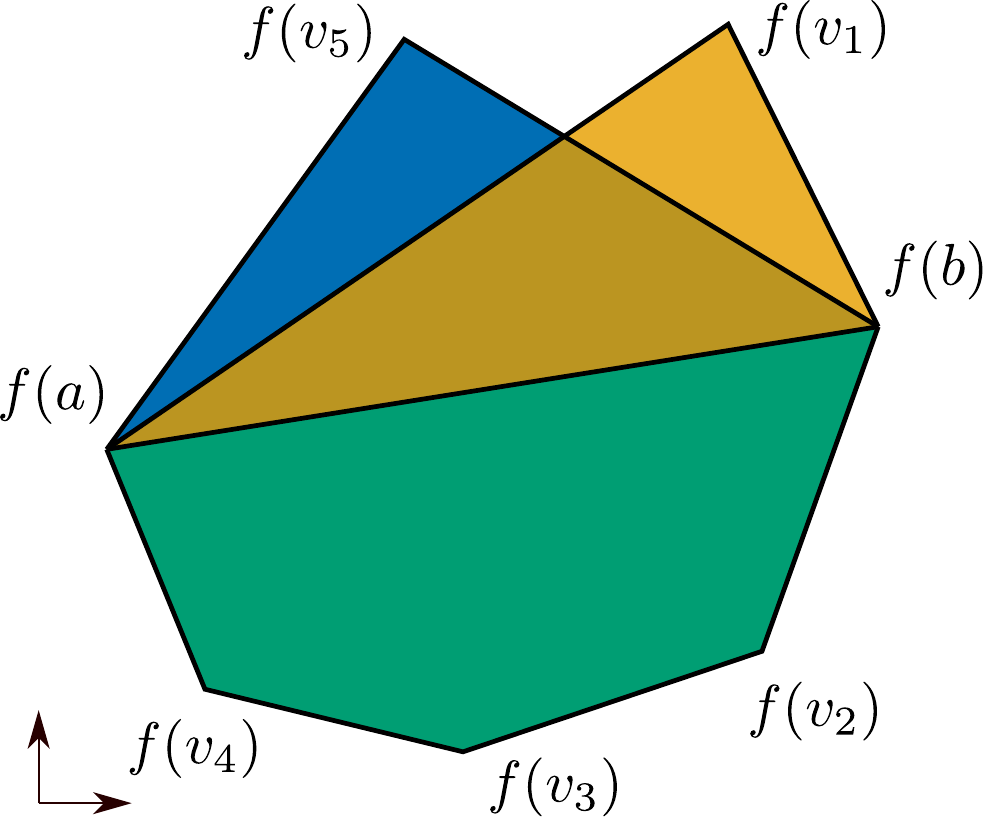}
  }
  \caption{
      Singular correspondence graph and Reeb space of the toy example from \cref{fig:domain-regular-arrangement}.
    }
  \label{fig:singular-correspondence-graph-and-reeb-space}
\end{figure}

\paragraph*{Stage III. Geometric Postprocessing}
Once we have computed the singular correspondence graph, we can compute the geometry and adjacency of the sheets of the Reeb space $R_f$.
By the geometry of a sheet of the Reeb space, we mean its image into the plane via $\pi: R_f \to \real{2}$.
Let $L$ be a sheet that matches a connected component of $\bar{H}$. 
Then, the geometry of $L$ is exactly the union of faces in the singular arrangement whose fiber graph class components span that component of $\bar{H}$.
Sheets are adjacent when they have incident faces in the singular arrangement. 
The singular correspondence graph and the Reeb space of our running example are given in \cref{fig:singular-correspondence-graph-and-reeb-space}.
Notice how much smaller the singular correspondence graph is compared to the correspondence graph in \cref{fig:arrange-and-traverse-example-h}.

\subsection{Nested Faces}
\label{sec:nested-faces}

We now address the case where the image of the singular set is \emph{not} connected.
In this case, either the unbounded face of the singular arrangement has more than one hole, or some bounded face has \revision{at least one hole}, or both.
If the unbounded face of the arrangement has multiple holes, we can traverse them independently.
If a bounded face $\bar{F_1}$ has a nested face $\bar{F_2}$ (a hole), we find the shortest path in the $1$-skeleton of $K$ that connects a vertex that maps to $\partial \bar{F_1}$ to a vertex that maps to $\partial \bar{F_2}$.
We mark the regular edges on this path as \emph{pseudo-singular} and compute the arrangement of all singular and pseudo-singular segments, which is now guaranteed to be connected.
After this, our algorithm proceeds as usual. 

Note that this may introduce additional faces in the singular arrangement, such as $\bar{F_3}$ in \cref{fig:nested-faces}, whose nonessential fiber graphs will be computed by our algorithm.
We will demonstrate that this is not an issue in practice and that only a negligible number of pseudo-singular edges are added for real-life datasets we evaluate in \cref{sec:evaluation}.

\begin{figure}[ht]
  \centering
    \includegraphics[width=0.26\textwidth]{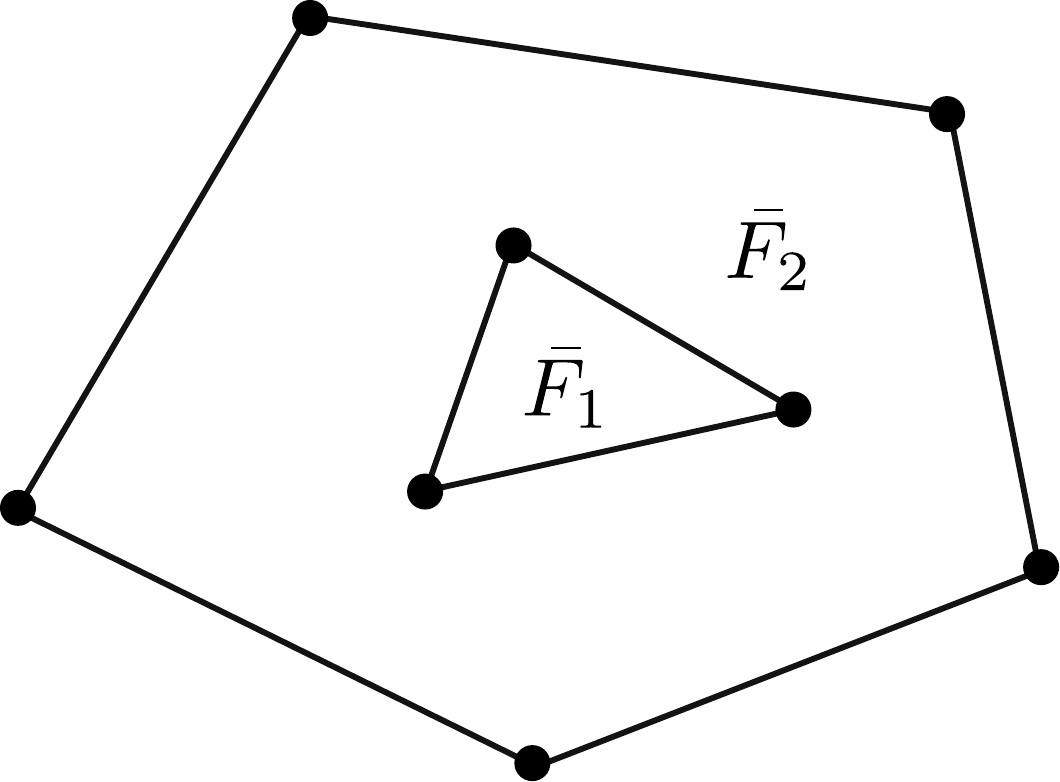}
  \hspace{0.1\textwidth}
    \includegraphics[width=0.26\textwidth]{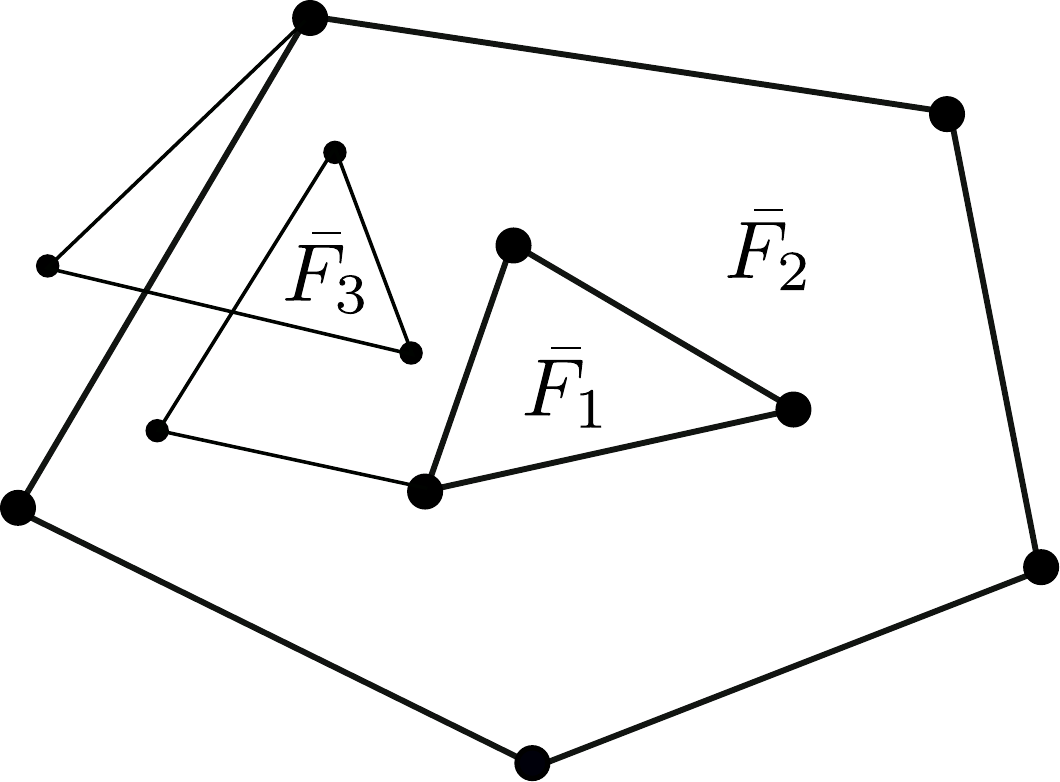}
  \caption{
      Connecting nested faces by adding pseudo-singular segments.
  }
  \label{fig:nested-faces}
\end{figure}

\subsection{Higher Dimensional Domains}
\revision{Finally, we describe an extension of our algorithm for computing the Reeb space of a \pl map $f : |K| \to \real{2}$, where $K$ is a triangulation of a $d$-manifold with $d > 3$.}
Due to the skeleton lemma~\cite{edelsbrunnerReebSpacesPiecewise2008b}, the Reeb space of $f : |K| \to \real{2}$ is homeomorphic to the Reeb space of $f' : K^{(3)} \to \real{2}$, which is the restriction of $f$ to the $3$-skeleton of $K$.
In order to compute the Reeb space of $f'$, we compute all singular edges of $K$~\cite{edelsbrunnerJacobiSetsMultiple2002} and discard all simplices of dimension bigger than three from $K$.
Then, we map the singular edges to the plane, compute their arrangement, and traverse it to compute the singular correspondence graph.
The only difference is that we need a new subroutine to track the connectivity of fiber graphs.
This is because the regular fibers of $f'$ are not \pl curves but the $1$-skeleton of triangulated $(d-2)$-manifolds.
This requires more involved data structures for dynamic graph connectivity tracking, similar to how Reeb graph algorithms handle the scalar case~\cite{doraiswamyEfficientAlgorithmsComputing2009, parsa2012}.

\section{Proofs}
\label{sec:proofs}

In this section we prove the correctness and worst-case running time of our algorithm using our definitions from the start of \cref{sec:algorithm}.

\subsection{Correctness}
\label{sec:correctness}

First, we prove the correctness of each stage in our algorithm (see \cref{sec:algorithm-desc}).

\begin{lemma}[Essential Faces]
    The singular arrange and traverse algorithm correctly identifies all essential faces.
    \label{lemma:essential-faces}
\end{lemma}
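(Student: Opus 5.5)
We argue by a characterization of essential faces in terms of the boundaries of singular faces, and then show that Stage I enumerates exactly these boundary pieces. By definition, a face $F$ of the full arrangement $A$ is essential if and only if its closure meets a singular segment or a singular vertex point. Since $\bar{A}$ consists of a subset of the segments of $A$, every face $F \in A$ lies in exactly one face $\bar{F}$ of $\bar{A}$. Moreover, $F$ is incident to a singular segment or vertex point exactly when $\partial F \cap \partial \bar{F} \neq \emptyset$, because all singular segments and vertex points in the closure of $\bar{F}$ lie on $\partial \bar{F}$. This holds for bounded faces, and for the unbounded one after handling its holes, using the pseudo-singular construction of \cref{sec:nested-faces} for nested faces. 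So the first step is to establish the equivalence: $F$ is essential if and only if $\partial F \cap \partial \bar{F}$ is nonempty, for the unique $\bar{F} \ni F$.

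The second step shows that the pieces $\partial F \cap \partial \bar{F}$ are precisely the arcs of $\partial \bar{F}$ between consecutive entries of $Q_{\bar{F}}$. Consider walking counterclockwise along $\partial \bar{F}$, the boundary walk given by the DCEL. Near $\partial \bar{F}$ on the inside of $\bar{F}$, the face of $A$ in which we lie changes only when we cross the image of a regular edge, since singular segments do not enter the interior of $\bar{F}$. Such crossings are exactly the red-blue intersections between regular segments and $\partial \bar{F}$, including regular segments incident to a vertex of $\bar{F}$. Hence the boundary walk is partitioned into maximal arcs, each contained in the closure of a single face of $A$, and these arcs are delimited by consecutive elements of $Q_{\bar{F}}$.

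For this partition to hold, $Q_{\bar{F}}$ must list the crossings in their true counterclockwise order, which is the content of the third step. We verify the three ordering rules of Stage I. Crossings on different half-edges or vertices are ordered by the DCEL. Crossings on the same half-edge are ordered by their barycentric parameter along that half-edge. Several regular segments emanating from a common vertex of $\bar{F}$ are ordered radially; genericity ensures that no three segments meet at an interior point, so the radial order of these segments is well defined inside $\bar{F}$. The red-blue intersection routine of \cref{sec:background-intersections} reports all such intersections, so no crossing is missed. Together these give that every face $F$ with $\partial F\cap\partial\bar F\neq\emptyset$ appears as at least one consecutive pair in $Q_{\bar{F}}$, and that every consecutive pair yields an arc bounding some essential face. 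Consequently the algorithm identifies all essential faces and only essential faces, with multiplicity equal to the number of connected components of $\partial F \cap \partial \bar{F}$.

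The main obstacle is the first equivalence in degenerate situations, together with the fact that $Q_{\bar{F}}$ can be empty. When no regular segment meets $\partial \bar{F}$, the whole boundary must be treated as one arc, giving the single face $F = \bar{F}$. When a regular segment meets $\partial \bar{F}$ only at a vertex without entering $\bar{F}$, it must be excluded, or else recorded twice so that it delimits an empty arc; the radial ordering at that vertex is meant to settle this. I expect to handle these cases by an explicit local argument at vertices of $\bar{A}$ using genericity.
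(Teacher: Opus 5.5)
Your proposal is correct and follows essentially the same route as the paper. Both arguments identify each essential face $F \subseteq \bar{F}$ with the pieces of $\partial F \cap \partial \bar{F}$ delimited by consecutive regular segments in $Q_{\bar{F}}$, and both justify this through the counterclockwise order of the regular crossings along $\partial \bar{F}$. Your collar walk also makes explicit three points the paper leaves implicit or only sketches: that an essential face must meet $\partial \bar{F}$, the converse direction, and the case of a regular segment touching a vertex of $\bar{F}$ without entering $\bar{F}$.
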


\begin{proof}
    Let $F$ be an essential face of the full arrangement $A$, and let $\bar{F}$ be the face of the singular arrangement $\bar{A}$ that contains $F$.
    The case when $F = \bar{F}$ is trivial because then $F$ is already correctly represented in $\bar{A}$.
    Suppose that $F \subset \bar{F}$.
    Let $B$ be a connected component of $\partial F \cap \partial \bar{F}$.
    Then $B$ is a path from $i_1$ to $i_2$, where $i_1$ and $i_2$ are the intersection points of two regular segments $s_1$ and $s_2$ with $\partial \bar{F}$.

    First, we examine the case where $B = i_1 = i_2$.
    In the DCEL of the full arrangement $A$, the half-edges of the boundary cycle of $F$ that originate from $s_1$ and $s_2$ must be consecutive.
    This means that no other regular segment comes between $s_1$ and $s_2$ in the counterclockwise order of segments around $i_1$ in $A$. 
    Otherwise, that other segment would be on the boundary of $F$.
    Therefore, the segments $s_1$ and $s_2$ are present as a consecutive pair in the circular list $Q_{\bar{F}}$ of regular segments of $\bar{F}$.
    If on the other hand $i_1 \neq i_2$, then $B$ is subdivided into at least one half-edge in the DCEL of $A$.
    No other regular segment intersects the interior of $B$, because then the essential face $F$ would have been subdivided differently in $A$.
    Similarly to the previous case, other regular segments whose endpoint is either $i_1$ or $i_2$ are either before $s_1$ or after $s_2$ in the counterclockwise order of segments around those vertex points.
    Therefore, $s_1$ and $s_2$ are consecutive in $Q_{\bar{F}}$.

    Conversely, we can use a similar argument to show that each consecutive pair of regular segments in $Q_{\bar{F}}$ can be uniquely identified with an essential face.
\end{proof}

\begin{lemma}[Essential Fiber graphs]
    The singular arrange and traverse algorithm correctly computes all essential fiber graphs.
    \label{lemma:essential-fiber-graphs}
\end{lemma}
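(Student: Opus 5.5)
The plan is an induction on the order in which the BFS of Stage II visits the faces of the singular arrangement $\bar{A}$. The invariant is: whenever a face $\bar{F}$ is entered, the fiber graph assigned to the essential face through which we enter is exactly the preimage graph of that face of $A$. The loop operation then transports this correctness to every other essential face of $\bar{F}$.

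The basic tool is the local update rule of the arrange and traverse algorithm (\cref{sec:background-arrange-and-traverse}). If two faces $F_1, F_2$ of the full arrangement $A$ share a segment $f(ab)$, the fiber graph of $F_2$ equals that of $F_1$ with the triangles $abv$ ($v$ in the lower link) removed and the triangles $abv'$ ($v'$ in the upper link) added, or the reverse for the opposite direction. This holds because a fiber point crossing $f(ab)$ only changes the activity of triangles incident to $ab$. I would state this once and use it both for regular and singular crossings, including the case where a crossing passes through several collinear segment pieces. By genericity, a crossing point in the interior of a half-edge lies on a single segment, so only one edge's links are involved.

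\textbf{Base case.} The unbounded face of $\bar{A}$ has an empty fiber graph, since its points lie outside $f(|K|)$. Under the standing assumption that $f(S_f)$ is connected, we need that the outer boundary of $f(|K|)$ consists of singular segments; \cref{sec:nested-faces} handles the general case via pseudo-singular edges, which are simply treated as singular. Indeed, an edge whose image lies on the boundary of the image has an empty upper or lower link, so it is definite and hence singular. Crossing into a bounded face $\bar{F}$ through $f(ab)$ therefore adds exactly the triangles $abv$ from the appropriate link. By the update rule, this yields the fiber graph of the essential face $F\subseteq\bar{F}$ adjacent to $f(ab)$ at the crossing point.

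\textbf{Loop step.} By \cref{lemma:essential-faces}, consecutive pairs in $Q_{\bar{F}}$ correspond exactly to essential faces of $\bar{F}$. Walking along $\partial\bar{F}$ counterclockwise just inside the face, we pass from one essential face to the next by crossing exactly one regular segment, namely the next entry of $Q_{\bar{F}}$. A segment ending at a boundary vertex, or crossing $\partial\bar F$ twice, appears in $Q_{\bar F}$ correspondingly. Applying the update rule at each such crossing keeps the fiber graph correct, so every essential face of $\bar{F}$ receives its correct graph. This covers faces listed multiple times, since each occurrence is reached by a valid path of crossings; going fully around also returns to the starting graph, which gives a consistency check. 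The step from $\bar{F}$ to a BFS neighbour $\bar{F}'$ crosses a singular segment $f(ab)$ between two essential faces $F\subseteq\bar F$ and $F'\subseteq\bar F'$ that share that piece. Applying the update rule with the links of $ab$ gives the correct entry graph for $\bar F'$. Connectivity of $\bar{A}$ ensures that BFS reaches every face, which completes the induction.

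\textbf{Main obstacle.} The hardest part is the loop step. We must justify rigorously that the implicit walk along $\partial\bar{F}$ in the order $Q_{\bar F}$ crosses precisely the listed regular segments and nothing else. The delicate situations are at vertices of $\bar{F}$, where regular segments may share an endpoint (the radial-order tie-break), and at reflex or non-simple boundaries, including holes after pseudo-singular edges are added. My first approach would be to perturb the walk to a closed curve inside $\bar{F}$ parallel to $\partial\bar F$ at small distance. Genericity would then show that this curve meets exactly the regular segments incident to $\partial\bar F$, in the order $Q_{\bar F}$, and only the essential faces.
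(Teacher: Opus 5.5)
Your proposal is correct and is essentially the paper's argument made explicit. The paper notes that any path $P$ from the unbounded face to an essential face, crossing regular and singular segments, yields the correct fiber graph through the link-based additions and removals, and that the algorithm's BFS crossings and loop crossings form such a path; your induction on the BFS order checks this step by step, and your justification that $\partial f(|K|)\subseteq f(S_f)$ (so the base graph is empty) is a welcome addition.

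One small correction: connectivity of $f(S_f)$ (after adding pseudo-singular edges) is not what lets BFS reach every face, since the face-adjacency graph of any planar arrangement is connected. Its real role is in your loop step, where it makes each $\partial\bar{F}$ a single closed walk, so the loop along $Q_{\bar{F}}$ reaches every essential face of $\bar{F}$, including the entry faces needed for its neighbours.
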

\begin{proof}
    Since all essential faces are identified correctly by \cref{lemma:essential-faces}, our algorithm can traverse them, analogously to the arrange and traverse algorithm, but in a different order.
    To see why this does not affect the correctness, consider a path $P$ in the range that only intersects regular or singular segments, from the unbounded outside face to an essential face.
    Any such path gives rise to a sequence of additions and removals of triangles, based on the upper and lower links of edges whose segments $P$ intersects. 
    This sequence can be used to build the fiber graph of the essential face, which is the same as the fiber graph computed by our algorithm.
    Additional pseudo-singular segments in $\bar{A}$ (see \cref{sec:nested-faces}) do not affect the correctness because our algorithm still identifies such a path $P$ correctly.
\end{proof}

\begin{lemma}[Singular Correspondence Graph]
    The connected components of the singular correspondence graph correctly represent the sheets of the Reeb space.
    \label{lemma:singular-correspondence-graph}
\end{lemma}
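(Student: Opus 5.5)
The plan is to reduce the statement to the correctness of the original arrange and traverse algorithm~\cite{hristovArrangeTraverseAlgorithm2025}. That algorithm shows that the connected components of the full correspondence graph $H$ are in one-to-one correspondence with the sheets of $R_f$. It therefore suffices to build a map $\Phi$ from components of $H$ to components of $\bar{H}$ and show it is a bijection. Define $\Phi$ by sending a fiber graph component $C$ of a face $F \subseteq \bar{F}$ to the fiber graph class component $\bar{C}$ of $\bar{G}_{\bar{F}}$ that contains $C$.

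\textbf{Step 1: $\Phi$ is well defined on vertices.} I would show that every connected component of every fiber graph in $\bar{G}_{\bar{F}}$ lies in exactly one class component. The key fact is that inside $\bar{F}$ we only cross regular segments. Across a regular edge $ab$, the upper and lower links are each a single simply connected component, so the crossing replaces one path of triangles by another inside a single fiber component. This gives a bijection between the components of $G_{F_i}$ and $G_{F_j}$ for adjacent faces $F_i, F_j \subseteq \bar{F}$. Since $\bar{F}$ is connected, its faces in $A$ are connected by such crossings. Hence correspondence restricted to $\bar{F}$ is an equivalence relation in which each class meets each $G_{F_i}$ exactly once. It follows that the class components partition the components of the fiber graphs in $\bar{F}$, and each class component lies inside a single component of $H$.

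\textbf{Step 2: $\Phi$ maps components to components.} I would show that an edge of $H$ maps either to a single vertex of $\bar{H}$ or to an edge of $\bar{H}$. Edges of $H$ inside one $\bar{F}$ collapse to a single vertex by Step 1. Edges of $H$ across a singular segment $f(ab)$ between essential faces $F \subseteq \bar{F}$ and $F' \subseteq \bar{F'}$ are exactly what \cref{def:class-correspondence} turns into edges of $\bar{H}$. Only essential faces touch singular segments, and \cref{lemma:essential-faces} and \cref{lemma:essential-fiber-graphs} guarantee that these faces and their fiber graphs are computed correctly. Therefore every such $H$-edge is detected by the algorithm. Conversely, every $\bar{H}$-edge is witnessed by an $H$-edge.

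\textbf{Step 3: bijectivity on components.} A path in $\bar{H}$ lifts to a path in $H$: within a class component we use the regular-crossing correspondences from Step 1, and we use the witnessing $H$-edges between classes. Together with Step 2, this shows that two fiber components lie in the same component of $H$ if and only if their class components lie in the same component of $\bar{H}$. Hence $\Phi$ induces a bijection of connected components. The same argument shows that the sheet geometry is the union of the faces $\bar{F}$ spanned by the class components, and that adjacency is inherited from incidence of faces in $\bar{A}$.

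\textbf{Pseudo-singular segments.} Pseudo-singular segments (\cref{sec:nested-faces}) are regular segments that have been promoted. Crossing one is therefore a bijective correspondence, which only adds redundant $\bar{H}$-edges and does not change the components.

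\textbf{Main obstacle.} The hardest step is Step 1 at degenerate points: crossing $\bar{F}$ through a regular vertex point or near an intersection of regular segments. There I would argue that such points can be avoided by perturbing the path within $\bar{F}$, using genericity. A second subtle point is the claim that a regular crossing preserves the number of components. For this I would rely on the fact that fiber components change connectivity only at $S_f$~\cite{tiernyJacobiFiberSurfaces2017, hristovHypersweepsConvectiveClouds2022}, rather than redoing the local link analysis.
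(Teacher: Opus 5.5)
Your proposal follows the paper's own argument. The paper treats $\bar{H}$ as the contraction of $H$ along the correspondences inside each fiber graph class component, notes that contraction preserves connectivity and that nonessential fiber graphs only create such intra-class correspondences, and defers the correctness of individual correspondences to the original arrange and traverse algorithm; your map $\Phi$ and Steps 1--3 spell this out in more detail. One small remark: the ``exactly once'' claim in Step 1 does not follow from local bijections and connectivity of $\bar{F}$ alone, because it also needs transport around every closed loop of faces inside $\bar{F}$ to return each component to itself, but your component bijection in Steps 2--3 never uses it.
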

\begin{proof}

    The singular correspondence graph $\bar{H}$ can be obtained from the correspondence graph $H$ by contracting all vertices that are associated with the components of fiber graphs within the same fiber graph class.
    A contraction of a graph does not affect its connectivity. 
    Each vertex of $\bar{H}$ simply represents a larger area in the range of fiber points whose fibers \revision{have the same topology.
    The singular correspondence graph can be computed correctly by establishing correspondence only between the essential fiber graphs because nonessential fiber graphs do not introduce correspondences with new fiber graph classes. 
    They only correspond directly to other components within their own class, and it is precisely these correspondences that are lost in the contraction of $H$ to $H'$.}
    The correctness of the correspondence procedure between fiber graphs is proven in the original arrange and traverse algorithm~\cite{hristovArrangeTraverseAlgorithm2025}.
\end{proof}

\begin{corollary}[Reeb space]
    The singular arrange and traverse algorithm correctly computes the Reeb space of a generic \pl map $f: |K| \to \real{2}$, where $K$ is a triangulation of a 3-manifold.  
\end{corollary}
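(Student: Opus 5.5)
The corollary follows by chaining the three stage lemmas and then invoking the correctness of the original arrange and traverse algorithm~\cite{hristovArrangeTraverseAlgorithm2025} for the final reconstruction step.

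\textbf{Stages I and II.} By \cref{lemma:essential-faces}, Stage I outputs exactly the essential faces, each implicitly represented through consecutive pairs in $Q_{\bar{F}}$. By \cref{lemma:essential-fiber-graphs}, Stage II computes for each such face the same fiber graph that the arrange and traverse algorithm would compute. It remains to show that Stage III assembles these correctly into $R_f$.

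\textbf{Stage III.} By \cref{lemma:singular-correspondence-graph}, the connected components of $\bar{H}$ are in bijection with the connected components of the full correspondence graph $H$. For $H$ itself, the original algorithm proves that these components are in bijection with the sheets of $R_f$. Combining the two gives a bijection between components of $\bar{H}$ and sheets.

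\textbf{Geometry of each sheet.} The original algorithm describes the image $\pi(L)$ of a sheet $L$ as the union of faces $F \in A$ whose fiber graph components lie in the corresponding component of $H$. Each such $F$ lies in a unique $\bar{F} \in \bar{A}$. All fiber graph components of faces within $\bar{F}$ that correspond to one another are collected into one class component $\bar{C}$. So if one $F \subseteq \bar{F}$ contributes to $L$, then, because correspondence is transitive inside a class component, every face of $\bar{F}$ carrying a member of $\bar{C}$ contributes to $L$. This lets us replace the union over faces of $A$ by the union over faces of $\bar{A}$, which is exactly what Stage III reports.

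\textbf{Adjacency.} Two sheets are adjacent exactly when they are glued along $q(S_f)$. Their images can only meet along singular segments. Pseudo-singular segments need separate treatment: there, correspondence across the segment merely extends a class component, so no spurious adjacency or splitting arises. Hence adjacency of sheets is read off from incident faces of $\bar{A}$, as Stage III does.

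\textbf{Main obstacle.} The delicate point is the nested-faces construction of \cref{sec:nested-faces}. Pseudo-singular segments subdivide a face of $\bar{A}$, so a single region of uniform fiber topology may be split across several faces of the modified arrangement. I would argue that, because the edges involved are regular, crossing such a segment changes no fiber component's connectivity. Every component on one side therefore corresponds to a component on the other side, so the split is undone in $\bar{H}$ exactly as in $H$. This is the same argument \cref{lemma:singular-correspondence-graph} uses for nonessential faces. The genericity assumption on $f$ then rules out degenerate intersections, and the corollary follows.
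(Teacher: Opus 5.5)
Your proposal is correct and follows the paper's proof, which likewise just chains \cref{lemma:essential-faces}, \cref{lemma:essential-fiber-graphs} and \cref{lemma:singular-correspondence-graph}, the last of these resting on the correctness of the original arrange and traverse correspondence; your extra paragraphs on sheet geometry, adjacency and pseudo-singular segments spell out what the paper leaves to those lemmas and to \cref{sec:algorithm}. One wording fix: the images $\pi(L_1)$ and $\pi(L_2)$ of two sheets can overlap in whole regions of the plane (as in \cref{fig:example-fibers-reeb-space}), so the accurate claim is that the gluing locus $q(S_f)$ maps into $f(S_f)$, not that the images can only meet along singular segments.
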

\begin{proof}
    By \cref{lemma:essential-faces} and \cref{lemma:essential-fiber-graphs}, the essential faces and essential fiber graphs are computed correctly.
    By \cref{lemma:singular-correspondence-graph}, the singular correspondence graph represents the sheets of the Reeb space and is computed correctly.
    Therefore, the singular arrange and traverse algorithm correctly computes the Reeb space of $f$.
\end{proof}

\subsection{Complexity}
\label{sec:complexity}

Let $N_e$ and $N_t$ be the number of edges and triangles in $K$ respectively.
Let $N_s$ be the number of singular edges.
Let $k_s$ be the number of intersections between all singular segments, and let $k_r$ be the number of intersections between all regular and singular segments.
First, we compute the arrangement of all singular segments in time $O((N_s + k_s)\log{N_s})$.
Next, we compute the number of intersections between all regular and singular segments and their relative order in time $O((N_e + k_r)\log{N_e})$.
We use a red-blue segment intersection algorithm (see \cref{sec:background-intersections}) and sort the intersection points along each singular segment using their barycentric coordinates.
This concludes the geometric part of our computation.
What remains is the complexity of computing the essential fiber graphs and the singular correspondence graph, which we demonstrate with the following lemma.

\begin{lemma}
    Computing the essential fiber graphs and the singular correspondence graph has worst-case running time $O(N_sN_t \log N_t)$.
\end{lemma}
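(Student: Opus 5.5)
The plan is to charge every combinatorial update made in Stage II to a pair consisting of a singular edge and a triangle of $K$. Each such update is performed on a dynamic-connectivity structure that maintains the fiber graph of the current essential face. The key observation is that the traversal only adds or removes triangles, through two kinds of crossings. When we cross a singular segment $f(ab)$ between two faces of $\bar{A}$, the triangles $abv$ with $v$ in the link of $ab$ are updated. When we cross a regular segment $f(cd)$ during a loop operation of a face $\bar{F}$, the triangles $cdv$ with $v$ in the link of $cd$ are updated.

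First I will bound the total size of the essential fiber graphs and of the updates. A fiber graph is a subgraph of the dual graph restricted to triangles, so it has $O(N_t)$ vertices and $O(N_t)$ edges; each triangle lies in at most two tetrahedra, so it has degree at most two. For the loop operations, the list $Q_{\bar{F}}$ contains each regular segment once per crossing of $\partial\bar{F}$. Summing over all faces $\bar{F}$ gives $O(k_r)$ regular-segment crossings. Each crossing of $f(cd)$ costs $O(\deg(cd))$ updates, where $\deg(cd)$ is the number of triangles incident to $cd$.

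The step I expect to be the main obstacle is bounding $\sum_{cd}\deg(cd)\cdot(\#\text{crossings of } f(cd) \text{ with singular segments})$ by $O(N_sN_t)$. The crude bound counts $N_s$ crossings per regular edge times $\sum\deg = 3N_t$, which gives exactly $O(N_sN_t)$. Singular crossings are handled the same way: each singular segment borders $O(1)$ faces on each side of each of its $O(N_s)$ arrangement edges, giving $\sum_{ab\in S_f}\deg(ab)\cdot O(N_s)\le O(N_sN_t)$. A further concern is that a loop passes over consecutive essential faces along the boundary, so a regular segment entering and leaving $\bar{F}$ must be charged twice. This is harmless since it only doubles the count. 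The pseudo-singular edges of \cref{sec:nested-faces} number at most the size of a path per hole, and I will treat them as part of $N_s$.

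Each update is an insertion or deletion of a vertex of degree at most two in a dynamic graph with $O(N_t)$ vertices. Because fiber graphs are disjoint unions of paths and cycles, we can use balanced-tree (e.g.\ Euler-tour or link-cut) representations supporting insert, delete, and component-identifier queries in $O(\log N_t)$. This yields $O(N_sN_t\log N_t)$ for computing all essential fiber graphs. The correspondence graph $\bar{H}$ is then built by querying, at each crossing of a singular segment between essential faces $F$ and $F'$, the component identifiers of the unaffected neighbouring triangles. This costs $O(\log N_t)$ per affected triangle and is dominated by the same sum. Unioning fiber graph class components with union-find adds only near-linear overhead. Combining these gives the claimed $O(N_sN_t\log N_t)$ bound.
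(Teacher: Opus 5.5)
Your proposal is correct and follows essentially the paper's argument. Each edge is visited $O(N_s)$ times because only crossings with singular segments are processed, each visit costs $O(\textnormal{deg}(e)\log N_t)$ with a balanced-tree representation of the fiber graphs, and $\sum_e \textnormal{deg}(e) = 3N_t$ yields $O(N_sN_t\log N_t)$. Your extra care fills in details the paper leaves implicit without changing the route: the constant-factor double counting of boundary crossings, the path-and-cycle structure that justifies the logarithmic dynamic-connectivity cost, and folding the pseudo-singular edges into $N_s$.
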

\begin{proof}
    Let $k(e)$ be the number of times an edge $e \in K$ is visited by the algorithm.
    A visit can mean either crossing a singular segment in the traversal from one face to another in the singular arrangement, or crossing a regular segment in the looping of a single face.
    \revision{When we visit an edge with degree $\textnormal{deg}(e)$, where we define the degree of an edge as the number of its incident triangles, we add and remove at most $\textnormal{deg}(e)$ triangles from the relevant fiber graphs, and
    perform at most $\textnormal{deg}(e)$ connectivity queries.}
    Each of those operations takes at most $O(\log N_t)$ time if we represent each fiber graph with a balanced search tree~\cite{colemclaughlinLoopsReebGraphs2003}.

    The overall worst-case running time can then be obtained by summing the costs of all visits across all edges $O\big(\sum_{e \in E} k(e) \textnormal{deg}(e) \log(N_t)\big)$, where $E$ is the set of edges in $K$.
    Since we do not consider any intersections between regular segments, each segment is intersected, and hence visited, at most $N_s$ times. 
    Using $k(e) \leq N_s$ we obtain $O\big(N_s\log(N_t)\sum_{e \in E}\textnormal{deg}(e)\big)$.
    Using a straightforward counting argument, we can show that $\sum_{e \in E}\textnormal{deg}(e) = 3N_t$. 
    This leads to a worst-case running time of $O(N_sN_t \log N_t)$.
\end{proof}

Computing the connected components of $\bar{H}$ can be done in linear time in its size with a graph search.
The geometry of the sheets of the Reeb space can be obtained by combining all faces of the singular arrangement that belong to each sheet.
Thus, \emph{the overall complexity of our algorithm is $O(N_sN_t \log N_t)$}.
\revision{Note that, practical performance depends primarily on $k_s$ and $k_r$, which determine how many times an edge is actually visited.}

When $O(N_s) = O(N_e)$ our algorithm has the same running time as arrange and traverse.
However, our algorithm is significantly more efficient for practical data sets where the number of singular edges is orders of magnitude less than the number of regular edges, as we will demonstrate in our evaluation in \cref{sec:evaluation}.

\section{Implementation}
\label{sec:implementation}

We provide an open-source implementation of the singular arrange and traverse algorithm in C++.
The input is a tetrahedral mesh with two scalar values defined at each vertex, represented in the \texttt{.vtu} file format with VTK~\cite{vtkBook}. 
For the geometric preprocessing operations we used CGAL 6.0.2~\cite{cgal:eb-24b} with the exact predicates and exact constructions kernel~\cite{cgal:bfghhkps-lgk23-24b}.
We compute the red-blue intersections with CGAL's \texttt{box\_intersection\_d}~\cite{cgal:kmz-isiobd-24b}, which has quadratic complexity.
\revision{
While, this is not optimal, it has an efficient implementation.}

We track changes in the connectivity of fiber graphs with a graph search over the affected~\cite{hristovArrangeTraverseAlgorithm2025} fiber components.
While this operation is not logarithmic in the size of the fiber, in practice most of the connectivity computations are done within the loop operation for a face of the singular arrangement, so all triangles can be directly added or removed from the same fiber graph component without affecting the connectivity.

Another important practical optimization, similar to arrange and traverse, is to only keep the fiber graphs at the wave front of the BFS search in memory since those are the only ones needed to establish correspondence to create the singular correspondence graph~\cite{hristovArrangeTraverseAlgorithm2025}.
While degenerate cases, \revision{which occur when the input \pl map is not generic (see \cref{sec:preliminaries})}, can be handled robustly with simulation of simplicity~\cite{hristovRobustGeometricPredicates2025}, random numerical perturbation with double floating-point precision was sufficient for our tests.

\section{Evaluation}
\label{sec:evaluation}

In this section, we perform a practical evaluation of the efficiency of our algorithm in terms of running time and memory.
We benchmark our algorithm against the arrange and traverse algorithm~\cite{hristovArrangeTraverseAlgorithm2025}, since that is the only other open-source exact Reeb space implementation known to the authors.
Our tests were performed on a PC with two Intel Xeon Gold 6130 processors with 32 cores clocked at 2.1GHz and 384GB memory with Rocky Linux 9.6.
We measured execution time using the C++ \texttt{chrono} library and estimated memory usage via the process's resident set size by reading \texttt{/proc/self/status}.

We benchmark our algorithm on three numerical simulation datasets -- MVK~\cite{ chakrabortyDecipheringMethylationEffects2023a, sharma2025}, ethane-diol~\cite{carrFiberSurfacesGeneralizing2015, 8017627} and enzo~\cite{bryanENZOADAPTIVEMESH2014, carrFiberSurfacesGeneralizing2015}.
MVK comes from a quantum chemistry simulation studying photo-induced changes in electron transition orbitals. 
Ethane-diol is a molecular dynamics simulation of a molecule's electron density field and its reduced gradient. 
Lastly, the origin of Enzo is cosmological simulation of the universe's expansion; here, the bivariate field represents matter and dark matter concentration. 
Each dataset is given on a three-dimensional regular grid and then tetrahedralized using Paraview's~\cite{ahrens2005paraview} default filter.

\revision{
We summarize our results in \cref{tab:performance}. 
Columns labeled $\times$ show the speedup of our algorithm over arrange and traverse. 
In the time columns, this indicates how many times faster it is; in the memory columns, how much less memory it uses.
The timings for the geometric preprocessing stage are given in the \textit{Prep.} column and the timings for the topological traversal stage are given in the $R_S$ column.
The timings for the geometric postprocessing stage are omitted, as the computation time is negligible compared to the first two stages and depends on how the Reeb space will be used in downstream tasks.}
Each entry is rounded to the nearest whole number.
We downsampled each dataset multiple times to obtain a wide range of data sizes and proportion of the number singular edges, ranging from 0.20\% for MVK to 3.73\% for ethane-diol and finally 43.78\% for enzo.
The enzo datasets and ethane-diol with 3,150K tetrahedra required the use of perturbation with strength $0.0001$.

\begin{table*}[h]
\centering
\footnotesize
\begin{tabular}{p{0.03\linewidth}
                R{0.06\linewidth} 
                R{0.05\linewidth} 
                R{0.07\linewidth}R{0.03\linewidth}R{0.08\linewidth}R{0.05\linewidth} 
                R{0.09\linewidth}R{0.05\linewidth}R{0.08\linewidth}R{0.05\linewidth}} 
\toprule
\textbf{\rotatebox{90}{}} 
  & \multicolumn{1}{c}{\boldmath$N_T$} 
  & \multicolumn{1}{c}{\boldmath$N_s / N_e$} 
  & \multicolumn{4}{c}{\textbf{Time}} 
  & \multicolumn{4}{c}{\textbf{Memory}} \\
\cmidrule(lr){4-7} \cmidrule(lr){8-11}  
  & & 
  & Prep.(s) & $\times$ & $R_S$(s) & $\times$ 
  & Prep.(Mb) & $\times$ & $R_S$(Mb) & $\times$ \\
\midrule
\multirow{4}{*}{\rotatebox{90}{MVK}}
  & 22K    & 2.93\%   & 2    & 14 & <1   & 2K  & <1    & 5K  & <1     & 1K \\
  & 53K    & 1.70\%   & 3    & 28 & <1   & 6K  & <2    & 12K & <1     & 4K \\
  & 189K   & 0.78\%   & 8    & 72 & 2   & 25K & <2    & 64K & <1     & 18K \\
  & 1,540K & 0.20\%   & 41   & ---  & 6   & --- & <2    & --- & <1     & --- \\
\midrule
\multirow{4}{*}{\rotatebox{90}{\shortstack{Ethane-\\diol}}}
  & 25K    & 12.09\% & 85     & 0.9  & 30   & 129   & 204   & 58   & 237   & 23 \\
  & 46K    & 9.33\%  & 153    & 1.0  & 50   & 204   & 311   & 84   & 433   & 26 \\
  & 113K   & 6.50\%  & 402    & 1.4  & 123  & 426   & 600   & 139  & 810   & 46 \\
  & 384K   & 3.73\%  & 1,843  &  ---   & 556  & ---     & 1,610 & ---    & 2,059 & ---  \\
  & 3,150K & 1.44\%  & 19,259 &  ---   & 5,238 & ---     & 7,438 & ---    & 7,389 & --- \\
\midrule
\multirow{4}{*}{\rotatebox{90}{Enzo}}
  & 1K   & 49.88\% & 8     & 0.2  & 15     & 8     & 107   & 3    & 124    & 6 \\
  & 4K   & 48.53\% & 189   & 0.1  & 323    & 13    & 869   & 4    & 988    & 17 \\
  & 9K   & 44.62\% & 2,535  & 0.1  & 4,012  & 23    & 3,543 & 4    & 4,743  & 30 \\
  & 14K  & 43.78\% & 7,127  & ---    & 11,294 & ---     & 7,882 & ---    & 11,140 & --- \\
\bottomrule
\end{tabular}
\caption{
Performance metrics for real-life datasets. 
We report the number of input tetrahedra $N_T$, the percentage of singular edges $N_s$ relative to all edges $N_e$, computation times, and memory usage for both preprocessing and Reeb space computations, including scaling factors ($\times$) relative to arrange and traverse.
Runs with over a day of compute were terminated and marked with  "---".
}
\label{tab:performance}
\end{table*}

Our \emph{primary result} is that the singular arrange and traverse algorithm is up to four orders of magnitude faster for MVK at 189K tetrahedra. 
For this dataset, arrange and traverse requires over 13 hours of computation time and about 100 GB of memory.
Singular arrange and traverse not only takes less than 3 seconds and uses less than 700 MB of memory, but is also able to compute the Reeb space at full resolution (1,540K tetrahedra) in less than 10 seconds and 6 GB of memory.
This speedup is due to the small number of singular edges, which our algorithm can successfully exploit.

Although the two other datasets have a larger singular set, we still see a significant speedup -- at least two orders of magnitude for ethane-diol and at least one for enzo.
Note that even though the number of singular edges in enzo is about half the total number of edges, we still see a significantly higher speedup -- not just two-fold but up to 23x.
This is due to the fact that the singular arrange and traverse algorithm not only allows us to avoid computing nonessential faces, but also allows us to group the computation of regular fiber graphs and significantly reduce the size of the correspondence graph.

In the case of MVK and ethane-diol, we are able to compute the Reeb space of their original non-downsampled resolution.
This is something that would have been extremely difficult with arrange and traverse.
The enzo dataset, however, has 83 million tetrahedra in its non-downsampled resolution and a high proportion of singular edges, which makes it difficult for our algorithm to handle.

We verified the correctness of our algorithm by comparing the sheets it outputs against arrange and traverse.
For all datasets for which we could compute arrange and traverse, the outputs of both algorithms were identical.
Finally, we note that the amount of additional computation needed to connect the nested faces was negligible. In the MVK datasets our algorithm added one to two additional pseudosingular edges to connect the arrangement and avoid nested faces.
In the ethane-diol datasets that number was from 20 to 94.

The geometric preprocessing stage of our algorithm is significantly faster for MVK and ethane-diol and uses substantially less memory.
Our memory usage improvement holds for enzo as well, but it takes up to 10 times longer to compute, even though arrange and traverse computes the arrangement of all segments.
This \revision{may} be addressed by implementing a more efficient algorithm for red-blue segment intersection.

\section{Conclusion}
\label{sec:conclusion}

In this paper, we presented the singular arrange and traverse algorithm for the exact and efficient computation of Reeb spaces of bivariate \pl maps.
This algorithm takes advantage of the fact that datasets with a small number of singular edges, relative to the total number of edges, have smaller Reeb spaces, which allows us to avoid nonessential computation from previous Reeb space algorithms such as arrange and traverse~\cite{hristovArrangeTraverseAlgorithm2025}.

We presented a reformulation of the computation in terms of fiber graph classes and the singular correspondence graph, enabling the collective treatment of many related fiber graphs and thereby improving efficiency. We further provided an exact implementation of this approach and benchmarked it against the original arrange and traverse algorithm, demonstrating speedups ranging from one to four orders of magnitude across multiple real-world datasets derived from numerical simulations.

Future work will focus on optimising the running time through parallelisation as well as simplification for datasets with large singular sets.
Other open problems include non-heuristic methods for dealing with nested faces and extensions to higher dimensional ranges.

\bibliography{singular-at}

\end{document}